\newcommand{\bs}{\setminus}
\newcommand{\ignore}[1]{}
\newcommand{\NP}{\ensuremath{\mathsf{NP}}}
\newcommand{\PP}{\ensuremath{\mathsf{P}}}
\newcommand{\APX}{\ensuremath{\mathsf{APX}}}
\newcommand{\Z}{\mathbb{Z}}
\newcommand{\R}{\mathbb{R}}
\newcommand{\E}{E}
\newcommand{\F}{\mathcal{F}}
\def\xtheorem[#1][#2][#3]{\newtheorem{#2}[theoremm]{#3} \newrefformat{#2}{#3 \ref{#11}}}
\def\ytheorem[#1][#2][#3]{\newtheorem{#2}{#3} \newrefformat{#2}{#3 \ref{#11}}}
\newcommand{\comment}[1]{\hfill \texttt{//} #1}
\newcommand{\qedhere}[0]{\qed}
\begin{document}
\mainmatter

\title{Generalized Hypergraph Matching via Iterated Packing \\and Local Ratio}

\author{Ojas Parekh\inst{1} and David Pritchard\inst{2}}

\institute{Sandia National Laboratories, Albuquerque NM 87185, USA\footnote{Sandia National Laboratories is a multi-program laboratory managed and operated by Sandia Corporation, a wholly owned subsidiary of Lockheed Martin Corporation, for the U.S. Department of Energy's National Nuclear Security Administration under contract DE-AC04-94AL85000.},\\
\email{odparek@sandia.gov}
\and
University of Southern California, Los Angeles CA 90089, USA,\\
\email{dpritcha@usc.edu}}




\date{}
\maketitle

\begin{abstract}
In $k$-hypergraph matching, we are given a collection of sets of size at most $k$, each with an associated weight, and we seek a maximum-weight subcollection whose sets are pairwise disjoint. More generally, in $k$-hypergraph $b$-matching, instead of disjointness we require that every element appears in at most $b$ sets of the subcollection. Our main result is a linear-programming based $(k-1+\tfrac{1}{k})$-approximation algorithm for $k$-hypergraph $b$-matching. This settles the integrality gap when $k$ is one more than a prime power, since it matches a previously-known lower bound. When the hypergraph is bipartite, we are able to improve the approximation ratio to $k-1$, which is also best possible relative to the natural LP. These results are obtained using a more careful application of the \emph{iterated packing} method.

Using the bipartite algorithmic integrality gap upper bound, we show that for the family of combinatorial auctions in which anyone can win at most $t$ items, there is a truthful-in-expectation polynomial-time auction that $t$-approximately maximizes social welfare.  We also show that our results directly imply new approximations for a generalization of the recently introduced bounded-color matching problem.

We also consider the generalization of $b$-matching to \emph{demand matching}, where edges have nonuniform demand values. The best known approximation algorithm for this problem has ratio $2k$ on $k$-hypergraphs. We give a new algorithm, based on local ratio, that obtains the same approximation ratio in a much simpler way.

\end{abstract}

\section{Introduction}\label{sec:theintro}
In a matching problem we want to find the maximum weight subcollection of pairwise disjoint sets within a given collection. Often these problems are studied with respect to the maximum set size $k$ (i.e.~on ``$k$-hypergraphs"); matching is polynomial-time solvable for $k=2$, while it is \APX-hard for $k=3$, even in special cases like \emph{3-dimensional matching}~\cite{Kann91}.

The $b$-matching problem generalizes matching: the input specifies a limit $b_v$ for every vertex, and we can select at most $b_v$ sets containing each $v$; ordinary matching results when $b$ is the all-1 vector. A $b$-matching instance can allow each set $e$ to be selected multiple times up to some upper \emph{capacity} limit $c_e$. \emph{Simple} $b$-matching is the case where all capacities are unit. The \emph{uncapacitated} case is where $c = \overrightarrow{\infty}$, i.e.~there are no capacity limits.

One of our results considers the generalization of $b$-matching to \emph{demand matching}, a notion originally introduced for graphs in~\cite{ShepVet-MOR07}. For this problem each edge is given a demand value $d_e$, and we now constrain that for every vertex $v$, the sum of the $d$-values of the incident edges should be at most $b_v$. When $d$ is the all-1 vector we recover the $b$-matching problem.

Hypergraphic matching problems are often studied via linear programming relaxations. In this paper we use only the naive LP relaxations. 
The worst-case ratio between the LP optimum and the optimal integral solution is called the \emph{integrality gap}. 
An \emph{LP-relative $\alpha$-approximation algorithm} is one that produces (in polynomial time) an integral solution of value at least $1/\alpha$ times the LP's optimal value --- this both upper bounds the integrality gap by $\alpha$ and gives an $\alpha$-approximation algorithm. Many classical approximation algorithms are LP-relative; so the notion is not novel, rather, this terminology helps us be concise.

\subsection{Results}

Our main result is the following theorem.
\begin{theorem}\label{theorem:1}
There is an LP-relative $(k-1+\tfrac{1}{k})$-approximation algorithm for $k$-hypergraph $b$-matching, for any capacities.
\end{theorem}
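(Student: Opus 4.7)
The plan is to build on the iterated packing framework. After solving the natural LP, we pass to a basic optimal $x^*$ and iteratively commit edges to the integral matching. In each iteration we select an edge $e^*$ in the support of $x^*$, round it up to the next integer (packing it into our solution), decrement the budget $b_v$ at each $v \in e^*$, remove $e^*$ (and any edges that would violate a newly tight constraint) from the LP, and recurse on the residual instance. The LP-relative guarantee will follow from an amortized local-ratio argument that charges the LP value shed in each step to the integral weight gained, then sums inductively into the global $(k-1+\tfrac{1}{k})$ bound.

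Concretely, my first step would be to classify the edges of the support as \emph{saturated} ($x^*_e = c_e$) or \emph{fractional} ($0 < x^*_e < c_e$). Saturated edges can be rounded up to $\lfloor c_e \rfloor$ essentially for free, so the analysis centers on the fractional edges. Standard extreme-point theory gives a bijection between the fractional edges and a linearly independent set of tight vertex constraints; combined with the fact that each edge contains at most $k$ vertices, this yields the sparsity we exploit in locating a good $e^*$.

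The core local step compares the LP value shed after committing $e^*$ to the weight $w_{e^*}$ gained. Naively the shed mass is bounded by $\sum_{v \in e^*} x^*(\delta(v) \setminus \{e^*\})$, yielding only ratio $k$. To achieve the improved ratio $k-1+\tfrac{1}{k}$, the key observation is that $e^*$ itself contributes $x^*_{e^*}$ to each of its $k$ tight vertex constraints: by reapportioning these $k \cdot x^*_{e^*}$ units of ``self-contribution'' against the mass shed at neighbors, one saves a full $1/k$ factor over the naive bound, provided $e^*$ is chosen carefully relative to its neighborhood weights.

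The main obstacle is showing that an edge satisfying the required local inequality always exists in a nonempty extreme-point support, uniformly in the weights and capacities. I expect this to require a combinatorial existence lemma exploiting the extreme-point structure: plausibly an averaging argument over the fractional support (whose size equals the number of linearly independent tight constraints), showing that the \emph{average} fractional edge meets the bound; or a minimality argument selecting an edge with small neighborhood weight. The general-capacity setting adds a layer of bookkeeping on top of this, which should reduce cleanly to the unit-capacity case by splitting saturated edges into integer copies and treating them separately from the fractional core.
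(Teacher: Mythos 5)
Your plan describes an iterative-rounding / fractional-local-ratio scheme, not iterated packing: you pass to a basic optimal $x^*$, commit a single edge $e^*$ to a single integral solution, decrement the budgets, and argue via a local-ratio charging argument that the LP value shed is at most $\rho$ times $w_{e^*}$. That is the Chan--Lau template for $b=1$, and the paper explicitly notes that this technique does not directly extend to $k$-hypergraph $b$-matching. The obstruction is concrete. For $b=1$, after choosing $e^*$ incident to a low-degree vertex with $x^*_{e^*}\ge 1/\mu$, one sets $\widehat w_f=1$ for all $f$ meeting $e^*$; then $\widehat w(x^*)\le \sum_{v\in e^*}x^*(\delta(v)) - (k-1)x^*_{e^*}\le k-(k-1)/\mu=\rho$ because each $x^*(\delta(v))\le 1$, while any maximal solution blocked at some $v\in e^*$ has $\widehat w$-weight at least $1$. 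For general $b$, the degree sums $x^*(\delta(v))$ can be as large as $b_v$, so $\widehat w(x^*)$ can be $\Theta(\sum_{v\in e^*}b_v)\gg \rho$. Rescaling $\widehat w_f$ by $1/b_v$ restores the first bound but then $\widehat w_{e^*}=\sum_{v\in e^*}1/b_v$ may be far below $1$, breaking the "pay at least $1$" side. Your "self-contribution" idea (reapportioning $k\cdot x^*_{e^*}$) is just the $(k-1)x^*_{e^*}$ savings already present in the $b=1$ bound; it does nothing against $b_v>1$. The part you flag as the "main obstacle" — a combinatorial existence lemma producing a good $e^*$ uniformly in $b$ — is precisely what is missing, and an averaging argument over the fractional support will not supply it. There is also a secondary issue: after you decrement budgets and delete edges, the residual $x^*$ is no longer necessarily a basic solution of the residual LP, so "standard extreme-point theory" does not recurse cleanly without re-solving.

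The paper's actual proof of Theorem~\ref{theorem:1} takes a structurally different route. It never compares a single integral solution to a reweighted LP; instead it maintains a genuine $\rho$-convex combination $\sum_i\lambda_i x^i$ of integral solutions (Proposition~\ref{prop:cvsimple}), and each packing step distributes $x_e$ mass of an edge $e$ among the terms. The key innovation (Lemma~\ref{lemma:modified}) is to enforce two additional invariants not imposed by feasibility: $Ax^i\le\lceil Ax\rceil$ for the \emph{current} fractional solution $x$, and a cap of $\langle (Ax)_v\rangle$ on the $\lambda$-mass of solutions that hit this ceiling at each $v$. These invariants are what make the blocking mass at each vertex bounded by $1-x_e$ independently of $b_v$, which is exactly the role your self-contribution heuristic cannot fulfill. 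Without a replacement for those invariants, your plan would at best recover the known factor-$k$ bound you already identify as naive.
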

\noindent In \cite{Parekh11} one of the authors announced, without a proof, a weaker result than the above theorem, namely an upper bound of $k-1+\tfrac{1}{k}$ on the integrality gap.  Here we give an algorithm to find an integral solution matching this bound in polynomial time, requiring a significant extension of the techniques presented in \cite{Parekh11}.

For the special case $b=1$, F{\"u}redi, Kahn and Seymour~\cite{KahnEtAl-Combinatorica93} proved an upper bound of $k-1+\frac{1}{k}$ in 1993, while Chan \& Lau~\cite{ChaLau11} recently gave the first polynomial-time algorithm matching this bound.  Their technique does not directly extend to the $k$-hypergraph $b$-matching case.  The technique that we use to prove Theorem~\ref{theorem:1} is \emph{iterated packing}, the same technique from~\cite{Parekh11}. Part of the contribution of the present paper is to simplify and extend some of the the approaches from \cite{Parekh11} and \cite{ChaLau11}. Our main technical innovation is, using iterated packing, to explicitly specify particular additional solutions as ineligible for packing: not only solutions that would be ineligible for the \emph{original} problem, rather we additionally prohibit solutions exceeding the ceiling of the current fractional solution. 


Theorem~\ref{theorem:1} is tight for infinitely many $k$: when $k-1$ is a prime power, as observed in \cite{KahnEtAl-Combinatorica93}, the projective plane $\mathsf{PG}(2, k-1)$ of order $k-1$ yields a matching lower bound of $k-1+\tfrac{1}{k}$ on the integrality gap. It is an interesting open question to settle the integrality gap for any other values of $k$.

We are able to determine the exact integrality gap for another interesting class of hypergraphs. Call a hypergraph \emph{bipartite} (\cite{Aharoni90}; cf.~\cite{PS09}) if, for some distinguished subset $U$ of vertices, every hyperedge contains exactly one vertex from $U$.  
\begin{theorem}\label{theorem:2}
There is an LP-relative $(k-1)$-approximation algorithm for \emph{bipartite} $k$-hypergraph $b$-matching, for any capacities.
\end{theorem}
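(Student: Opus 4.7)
The plan is to follow the iterated packing approach used for Theorem~\ref{theorem:1}, but with a tightened packing step that exploits bipartiteness. Let $x^*$ be an optimal vertex of the natural LP relaxation, and restrict attention to the polytope $\P'$ defined by the original constraints together with the ceiling constraint $x \le \lceil x^* \rceil$, as in Theorem~\ref{theorem:1}. The goal is to express $x^*$ as an average of at most $k-1$ integer points of $\P'$; outputting the heaviest of these then certifies the $(k-1)$-approximation.

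The heart of the argument is a bipartite packing lemma: for any vertex $z$ of $\P'$, one can find, in polynomial time, an integer feasible $y$ with $y \le \lceil z \rceil$ such that iterating the extraction at most $k-1$ times decomposes $x^*$. The essential observation enabling the improvement over Theorem~\ref{theorem:1} is that the constraints $\sum_{e \ni v} x_e \le b_v$ for $v \in U$ have pairwise disjoint supports, since each hyperedge contains exactly one vertex of $U$. So the $U$-side system forms a partition-matroid-style constraint on edges and is already integral by itself. In the token-counting argument used to analyze a vertex of $\P'$ (which has a row-independent tight constraint system of size at most the number of variables), the $U$-side then contributes at most one tight row per active edge, and those rows can be satisfied integrally with no charge to the packing. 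Hence only the $k-1$ non-$U$ vertices per edge drive the bound, yielding $k-1$ rather than $k-1+\tfrac{1}{k}$.

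One then iterates: after each extraction, update the residual fractional solution and its ceiling and reapply the packing lemma; after at most $k-1$ rounds the average of the extracted $y^{(i)}$ reproduces $x^*$, and a standard averaging argument delivers an integer solution of weight at least $w\cdot x^*/(k-1)$. The main obstacle will be making the bipartite structural observation rigorous inside the iterated-packing token calculus, so that one genuinely saves a full vertex per edge rather than $1/k$, and verifying that after each peel-off the ceiling and feasibility invariants defining $\P'$ are preserved so that the packing lemma can be reapplied. The partition structure of the $U$-side makes both tasks natural: a single integral ``layer'' of a partition matroid can always be extracted without disturbing feasibility on that side, leaving only the non-$U$ constraints to drive the iteration.
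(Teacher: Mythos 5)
Your high-level plan (iterated packing, ceiling constraints as in Theorem~1, exploiting that each hyperedge meets $U$ exactly once) is on the right track, but the specific mechanism you propose for how bipartiteness helps is not the one the paper uses, and as stated it has a gap. You claim that the $U$-side constraints ``can be satisfied integrally with no charge to the packing'' because they form a partition. This is not correct inside the packing calculus: the set $Q_u$ of solutions $x^i$ that are already tight at a $U$-vertex $u\in e$ genuinely blocks packing $e$ there, and its $\lambda$-mass is not zero. Being a partition matroid does not make the $U$-side constraints free in the packing step; you are intersecting with the non-$U$ constraints and with the ceiling, so a full ``layer'' of the $U$-partition cannot simply be peeled off without disturbing feasibility elsewhere. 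The bookkeeping you call ``the main obstacle'' is in fact where your argument would break.

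What actually makes the bipartite case give $\rho=k-1$ in the paper is a sharpening of the \emph{degree bound}, not the removal of a charge. Lemma~\ref{lemma:count} shows that for a set of edges with linearly independent incidence vectors, the minimum nonzero degree $\mu$ drops from $k$ to $k-1$ when the hypergraph is bipartite; the proof is a rank/orthogonality argument (if every row and column had exactly $k$ ones, the vector equal to $-(k-1)$ on $U$ and $+1$ elsewhere would be a nonzero vector in the left kernel, contradicting full rank). Plugging $\mu=k-1$ into the same charge bound as for Theorem~1, namely $x_e+\lambda(Q)\le \mu x_e+(k-1)(1-x_e)=k-1+(\mu-k+1)x_e$, the coefficient of $x_e$ vanishes and one gets $\rho=k-1$ outright. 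No constraint is ever dropped from the charging; $Q_{\widehat v}$ still contributes $x(\delta(\widehat v))-x_e$, and the gain comes entirely from the tighter bound $x(\delta(\widehat v))\le (k-1)x_e$.

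Two further issues worth flagging. First, you describe expressing $x^*$ as ``an average of at most $k-1$ integer points'' and say the iteration runs ``at most $k-1$ rounds''; neither is right. The object constructed is a $\rho$-convex combination (nonnegative multipliers summing to $\rho=k-1$), the number of terms can be much larger than $k-1$, and the iteration is once per edge, not $k-1$ times. Second, even granting your ``no charge on $U$'' claim, you have not explained how to choose the edge $e$ so that the remaining $k-1$ charges of $(1-x_e)$ plus $x_e$ stay below $k-1$; that still requires a lower bound of the form $x_e\ge 1/(k-1)$ on some coordinate, which is exactly what the strengthened Lemma~\ref{lemma:count} supplies and your proposal omits.
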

\noindent Chan and Lau~\cite{ChaLau11} proved Theorem~\ref{theorem:2} in the special case that $b=1$ and the instance is $k$-dimensional\footnote{A hypergraph is $k$-dimensional if for some $k$-partition of the ground set, every edge intersects every part exactly once.}. Proving Theorem~\ref{theorem:2} is similar to Theorem~\ref{theorem:1} plus extending an observation of \cite{ChaLau11} from $k$-dimensional hypergraphs to bipartite ones. Like Theorem~\ref{theorem:1},
a matching integrality gap lower bound is known~\cite[p.~157]{Furedi81} when $k-1$ is a prime power: the hypergraphic dual of the affine geometry $\mathsf{AG}(2, k-1)$, i.e. a truncated projective plane,
has integrality gap $k-1$.

We obtain the following interesting corollaries from the bipartite case.  In the bounded-color $k$-hypergraph $b$-matching problem we are given an instance of the $k$-hypergraph $b$-matching problem along with a partition of the edge set into $l$ color classes, $E = E_1 \cup \cdots \cup E_l$, and a positive integer $w_i$ for $1 \leq i \leq l$.  We seek a feasible $k$-hypergraph $b$-matching of maximum weight such that at most $w_i$ edges from class $E_i$ are selected for each $i$.
\begin{corollary}\label{cor:BCHM}There is an LP-relative k-approximation for bounded-color $k$-hypergraph $b$-matching.
\end{corollary}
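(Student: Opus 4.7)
The plan is to reduce bounded-color $k$-hypergraph $b$-matching to bipartite $(k+1)$-hypergraph $b$-matching and then invoke Theorem~\ref{theorem:2}. Concretely, given an instance with color classes $E_1, \dots, E_l$ and bounds $w_1, \dots, w_l$, I would construct an auxiliary hypergraph by introducing one new ``color vertex'' $u_i$ for every class $i$, setting $b_{u_i} := w_i$, and replacing each edge $e \in E_i$ by $e \cup \{u_i\}$. The original vertex capacities, edge weights, and edge capacities $c_e$ are all retained.

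The next step is to verify the three properties needed to invoke Theorem~\ref{theorem:2}. (a) Each new edge has size at most $k+1$, so the new hypergraph is a $(k+1)$-hypergraph. (b) Every new edge contains exactly one vertex from $U := \{u_1,\dots,u_l\}$, so the new hypergraph is bipartite in the sense of~\cite{Aharoni90}. (c) The natural LP of the reduced instance coincides with the natural LP of the bounded-color instance: the new constraint at $u_i$ is precisely $\sum_{e \in E_i} x_e \le w_i$, and the remaining constraints are unchanged. Hence feasible fractional (resp.\ integral) solutions correspond bijectively and preserve objective value.

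Applying Theorem~\ref{theorem:2} to the auxiliary instance then produces in polynomial time an integral matching of value at least $\frac{1}{k+1-1} = \frac{1}{k}$ times the LP optimum of the reduced instance, which by (c) equals the LP optimum of the original bounded-color instance. Pulling this integral solution back through the bijection gives the required LP-relative $k$-approximation. I do not anticipate a real obstacle here: the whole argument is a structural reduction that leverages the flexibility of the ``bipartite'' definition to absorb the color budgets into vertex capacities, so correctness is immediate once the reduction is set up.
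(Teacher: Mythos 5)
Your proposal is correct and follows exactly the paper's approach: add one new vertex $c_i$ per color class with $b$-value $w_i$, replace each $e \in E_i$ by $e \cup \{c_i\}$, observe that the result is a bipartite $(k+1)$-hypergraph with distinguished part $U = \{c_1,\dots,c_l\}$, and invoke Theorem~\ref{theorem:2}. The bijection of LP solutions you note is the same observation the paper makes implicitly when saying the reduction ``precisely models'' the problem.
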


\begin{corollary}\label{cor:auctions}
For combinatorial auctions where each bidder can win at most $(k-1)$ items, there is a randomized polynomial-time mechanism that, in expectation, is both truthful and $(k-1)$-approximately maximizes social welfare.
\end{corollary}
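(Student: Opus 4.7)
The plan is to apply Theorem~\ref{theorem:2} inside the Lavi--Swamy framework for converting LP-relative approximations into truthful-in-expectation combinatorial auction mechanisms.

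First I would cast the auction as a bipartite $k$-hypergraph $b$-matching problem. Take the vertex set to be the disjoint union of bidders and items, with the bidders playing the role of the distinguished side $U$; for each bidder $i$ and each bundle $S$ with $|S|\le k-1$, include a hyperedge $\{i\}\cup S$ of weight $v_i(S)$. Since this edge has size at most $k$ and contains exactly one vertex (namely $i$) from $U$, the resulting instance is a bipartite $k$-hypergraph. Setting $b_v = 1$ at every vertex enforces that each bidder wins at most one bundle and each item is sold at most once, so maximum-weight feasible $b$-matchings coincide with welfare-optimal allocations under the $(k-1)$-item restriction. Crucially, because only bundles of size at most $k-1$ are relevant, the natural LP has only polynomially many variables and can be solved exactly in polynomial time.

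Next, Theorem~\ref{theorem:2} supplies a polynomial-time LP-relative $(k-1)$-approximation for this LP. The Lavi--Swamy construction then converts any such LP-relative approximation for a packing integer program into a truthful-in-expectation mechanism with the same ratio: invoke the $(k-1)$-approximation, driven by suitably chosen weight perturbations, inside an ellipsoid-style loop to decompose the scaled LP optimum $x^{*}/(k-1)$ as a convex combination of integer feasible allocations; sample an integer allocation from this distribution and charge VCG-style payments computed against the expected welfare of the distributional range. The resulting mechanism is maximal-in-distributional-range, hence truthful in expectation, and achieves social welfare in expectation at least a $1/(k-1)$-fraction of the LP optimum, which dominates the integer optimum.

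The main obstacle is verifying the compatibility of the iterated-packing algorithm with the Lavi--Swamy convex-decomposition requirement: their machinery needs, for every fractional $x^{*}$ in the matching polytope, a distribution over integer feasible allocations whose mean is coordinate-wise dominated by $x^{*}/(k-1)$, equivalently the polyhedral statement that the LP polytope is contained in $(k-1)$ times the integer hull. I would verify that the iterated-packing proof of Theorem~\ref{theorem:2} delivers exactly this stronger guarantee (and not merely an objective-value comparison), so that the $(k-1)$-approximation can serve as the oracle inside the Carath\'eodory/ellipsoid decomposition; modulo that check, the rest of the Lavi--Swamy construction is black-box and yields the claimed corollary.
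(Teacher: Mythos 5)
Your reduction (bidders on the distinguished side $U$, one hyperedge $\{i\}\cup S$ of size $\le k$ per bidder--bundle pair, polynomially many variables because $|S|\le k-1$) followed by plugging the LP-relative $(k-1)$-approximation of Theorem~\ref{theorem:2} into the Lavi--Swamy framework is exactly the paper's argument. The compatibility check you flag at the end is automatically satisfied here: the iterated-packing algorithm behind Theorem~\ref{theorem:2} does not merely certify an objective-value bound but explicitly outputs the fractional point as a $\rho$-convex combination of integer feasible matchings, which is precisely the Carr--Vempala-style decomposition Lavi--Swamy consumes, so no additional ellipsoid-based reconstruction is needed.
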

\noindent
We are not aware of any prior results for this extremely natural class of combinatorial auctions, cf.~\cite[Ch.~12]{NREV07}.

The proof of Corollary~\ref{cor:auctions} uses the mechanism of Lavi and Swamy~\cite{LS05}, where the distinguished vertices in the bipartite hypergraph correspond to the bidders. For this application, it is crucial that Theorem~\ref{theorem:2} gives an \emph{LP-relative} approximation in polynomial time. 

Finally, we give a new short proof of the following known theorem:
\begin{theorem}[\cite{Parekh11}]\label{theorem:3}
There is an LP-relative $2k$-approximation for $k$-hypergraph demand matching.
\end{theorem}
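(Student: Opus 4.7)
The plan is to prove this by the \emph{local ratio} method. We design a recursive procedure: if all edge weights are nonpositive, output $\emptyset$; else pick an edge $e^*$ with $w_{e^*} > 0$, decompose $w = w_1 + w_2$ with $w_1 \ge 0$ supported in the neighborhood $N[e^*] := \{f : f \cap e^* \ne \emptyset\}$ and $w_2 \ge 0$, recurse on $w_2$ to obtain a demand matching $S'$, and extend $S'$ to a demand matching $S$ that is maximal within $N[e^*]$. The local ratio lemma then reduces the $2k$-approximation for $w$ to two properties: (i) $w_1^\top x^* \le 2k \cdot w_{e^*}$ for any LP-feasible $x^*$, and (ii) every $S$ maximal within $N[e^*]$ satisfies $w_1(S) \ge w_{e^*}$.

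The key choice is to set
\[ w_1(f) \;:=\; 2\, w_{e^*}\sum_{v \in f \cap e^*}\frac{d_f}{b_v} \qquad (f \in N[e^*]), \]
and $w_1(f) := 0$ otherwise (clipped by $w_f$ if needed, which is enforced by choosing $e^*$ to be a positive-weight edge of minimum weight in the relevant neighborhood). Property (i) is immediate by swapping the order of summation and using the LP capacity constraint $\sum_{f \ni v} d_f x^*_f \le b_v$ at each $v \in e^*$:
\[ w_1^\top x^* \;=\; 2\,w_{e^*} \sum_{v \in e^*} \frac{1}{b_v} \sum_{f \ni v} d_f x^*_f \;\le\; 2\,w_{e^*}\,|e^*| \;\le\; 2k \cdot w_{e^*}. \]

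Property (ii) proceeds by a case analysis on whether $e^* \in S$. If $e^* \in S$, then $w_1(S) \ge w_1(e^*) = 2 w_{e^*}\sigma$, where $\sigma := \sum_{v \in e^*} d_{e^*}/b_v$. If $e^* \notin S$, maximality of $S$ within $N[e^*]$ forces some vertex $v \in e^*$ to have $\sum_{g \in S,\,g \ni v} d_g > b_v - d_{e^*}$, whence $w_1(S) \ge 2 w_{e^*}(1 - d_{e^*}/b_v)$. If $\sigma < \tfrac12$, then each $d_{e^*}/b_v < \tfrac12$, so the ``$e^* \notin S$'' bound already exceeds $w_{e^*}$. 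The main obstacle is the complementary regime $\sigma \ge \tfrac12$, in which case~1 gives $w_1(S) \ge w_{e^*}$ only when $e^* \in S$; this is handled by a refined extension rule that ensures $e^* \in S$ (exchanging any blockers in $S'$ for $e^*$, relying on those blockers contributing zero $w_2$-value because of how $e^*$ was chosen). Combined with (i), the local ratio lemma then yields the claimed LP-relative $2k$-approximation by induction on the recursion.
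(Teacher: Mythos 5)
Your high-level plan — local ratio around a single hyperedge $e^*$, a weight vector concentrated on $N[e^*]$, and a $2k$ upper bound on $w_1^\top x^*$ via the degree constraints at the $\le k$ vertices of $e^*$ — matches the paper's proof. Your bound for property~(i) is essentially the paper's argument. But three concrete details are off, and the third one is a real gap, not just a loose end.

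First, the selection rule for $e^*$ should be \emph{minimum demand} $d_{e^*}$, not minimum weight. The whole point of this choice is that every blocker $g$ has $d_g \ge d_{e^*}$; combined with the blocking inequality $\sum_{g \in S,\,g \ni v} d_g > b_v - d_{e^*}$ and the no-clipping assumption, this gives $\sum_{g \in S,\,g \ni v} d_g \ge \max\{b_v - d_{e^*},\, d_{e^*}\} \ge b_v/2$, which closes property~(ii) for the $e^* \notin S$ case with no sub-case split on $\sigma$. Your rule (minimum weight) does not yield $d_g \ge d_{e^*}$, so this estimate is unavailable, which is why you are stuck with only the weaker $\sum d_g > b_v - d_{e^*}$ and need the $\sigma$ case analysis at all. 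The parenthetical about ``clipping'' is also a red herring: local ratio does not require $w_2 \ge 0$; the recursion simply drops coordinates where $w - w_1$ is nonpositive.

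Second, and more seriously, $w_1(e^*)$ must be set \emph{explicitly} to $w_{e^*}$ (i.e.~$\widehat w_{e^*} := 1$ in normalized form) rather than computed from the same formula as other edges. With your formula, $w_1(e^*) = 2\,w_{e^*}\sigma$. When $\sigma < \tfrac12$ and the extension step does add $e^*$ (and possibly nothing else from $N[e^*]$, which the instance can force), you get $w_1(S) = 2\,w_{e^*}\sigma < w_{e^*}$ and property~(ii) fails. You analyzed the $\sigma < \tfrac12,\ e^* \notin S$ branch and the $\sigma \ge \tfrac12,\ e^* \in S$ branch, but not $\sigma < \tfrac12,\ e^* \in S$; the proposed ``refined extension rule that ensures $e^* \in S$'' makes this case more likely, not less. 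Setting $w_1(e^*) := w_{e^*}$ makes $e^* \in S$ trivially fine while leaving the $2k$ upper bound in property~(i) intact (the paper even gets the sharper $k+1$ on the $e^*\in X$ branch by using the denominator $b_v - d_{e^*}$ there).

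Third, the ``refined extension rule'' is not a well-defined step and is not needed once the first two points are fixed. Exchanging a blocker $g$ in $S'$ for $e^*$ can violate feasibility at vertices of $g$ outside $e^*$, and the claim that blockers have ``zero $w_2$-value'' has no justification from any choice of $e^*$. The paper avoids all of this by using the denominator $\max\{b_v - d_{e^*}, d_{e^*}\}$ (equivalently, at least $b_v/2$) together with the minimum-demand choice; then both sides of property~(ii) follow by a one-line computation, and the greedy ``add $e^*$ if feasible'' extension is all that is required.

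In short: keep your framework, but pick $e^*$ by minimum demand, hard-code $w_1(e^*) = w_{e^*}$, and in the $e^* \notin S$ case bound the blocking mass from below by $\max\{b_v - d_{e^*}, d_{e^*}\}$. That recovers the paper's Lemma~\ref{lemma:lr} and with it the $2k$ guarantee; the LP-relative version follows because your bound in~(i) already holds for fractional $x^*$.
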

\noindent Our simpler proof is based on the local ratio method, rather than the iterated packing used in \cite{Parekh11}. We rely on a connection in \cite[p.~12]{BarYehudaEtAl-SICOMP06} between local ratio and iterated packing. 

\subsection{Related Work}\label{sec:intro}
As Tutte observed~\cite{Tutte54}, both in edge-weighted graphs and in the cardinality case, uncapacitated graphic $b$-matching can be reduced to matching by replacing each vertex by $b_v$ clones. Each edge $uv$ is likewise cloned $b_ub_v$ times.
This reduction has two problems: (1) the clones cause an exponential increase in the instance size (from $\lg \lVert b \rVert_1$ to $\lVert b \rVert_1$); and (2) it does not work in the capacitated case, since we need to prevent too many clones of the same edge from being selected. Cloning applies to hypergraphs, too, but has the same two problems. Algorithmically, we can often avoid (1) by not dealing with the clones explicity. For graphs we can fix problem (2): an edge-trisecting reduction~\cite{Tutte54}~(see also \cite[p.~562]{Sc03}) extends cloning to work on capacitated instances. But for hypergraphs, there is no known workaround for problem (2).

As a strawman, let us mention that one can reduce capacitated $b$-matching in $k$-hypergraphs to uncapacitated $b$-matching in $(k+1)$-hypergraphs, by inserting new vertices in each hyperedge and by moving each edge's capacity to the $b$ value of its new vertex. One can even then apply cloning. But this is not that useful for us: e.g., we cannot use the previously-known $b=1$ case of version Theorem~\ref{theorem:1} to even prove the nonconstructive version of Theorem~\ref{theorem:1} for general $b$, since this reduction increases the hyperedge size from $k$ to $k+1$.


Algorithmically, the \emph{simple} (capacity $c=\mathbf{1})$ case of $b$-matching is the hardest. The proof is standard, by fixing the integer part of an optimal fractional solution.
\begin{observation}\label{obs:caps}
Given an (LP-relative) $\alpha$-approximation to simple $b$-matching in $k$-hypergraphs, we can obtain the same quality of approximation for general capacities.
\end{observation}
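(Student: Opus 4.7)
The plan is the standard ``peel off the integer part'' reduction. Given a capacitated $k$-hypergraph $b$-matching instance with weights $w$ and capacities $c$, I first solve its natural LP to get an optimal $x^*$ with value $\mathrm{OPT}_{\mathrm{LP}} = w \cdot x^*$, and decompose $x^* = \lfloor x^* \rfloor + f$ where $f_e := x^*_e - \lfloor x^*_e \rfloor \in [0,1)$. The integer part $\lfloor x^* \rfloor$ goes directly into the final integral solution, and the fractional part $f$ is handled by invoking the hypothesized simple $b$-matching algorithm on a residual instance.

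Concretely, I form the residual instance on the edges $e$ with $\lfloor x^*_e \rfloor < c_e$ (edges saturating capacity have $f_e = 0$ and can be ignored), giving each unit capacity and reducing vertex bounds to $b'_v := b_v - \sum_{e \ni v} \lfloor x^*_e \rfloor \ge 0$. The fractional vector $f$ is feasible for the natural LP of this simple $b'$-matching instance, because $f_e < 1$ and $\sum_{e \ni v} f_e = \sum_{e \ni v} x^*_e - \sum_{e \ni v} \lfloor x^*_e \rfloor \le b'_v$. Applying the given LP-relative $\alpha$-approximation yields an integral $y$ with $w \cdot y \ge (1/\alpha)\, w \cdot f$.

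The output $z := \lfloor x^* \rfloor + y$ is feasible for the original instance by construction (on a residual edge $z_e \le \lfloor x^*_e \rfloor + 1 \le c_e$; vertex bounds sum correctly via $b'_v$), and its weight satisfies
\begin{equation*}
w \cdot z \;\ge\; w \cdot \lfloor x^* \rfloor + \tfrac{1}{\alpha}\bigl(w \cdot x^* - w \cdot \lfloor x^* \rfloor\bigr) \;\ge\; \tfrac{1}{\alpha}\, w \cdot x^* \;=\; \tfrac{1}{\alpha}\, \mathrm{OPT}_{\mathrm{LP}},
\end{equation*}
where the second inequality uses $\alpha \ge 1$ and $w \ge 0$. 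I do not anticipate any real obstacle; the only subtlety worth flagging is that when capacities are binary-encoded one relies on a polynomial-time LP solver returning $x^*$ in compact form so that $\lfloor x^* \rfloor$ and $b'$ can be computed and passed to the simple-$b$-matching algorithm in polynomial time.
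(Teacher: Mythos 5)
Your proof is correct and follows essentially the same route as the paper's: solve the LP, fix the integer part $\lfloor x^*\rfloor$, pass the fractional remainder to the given simple-$b$-matching algorithm on the residual instance (unit capacities, reduced bounds $b'$), and add the result back. The paper states the slightly sharper fact that the residual LP optimum equals $\mathrm{OPT}_{\mathrm{LP}} - w\cdot\lfloor x^*\rfloor$ rather than merely bounding it below by $w\cdot f$, but your inequality already suffices for the LP-relative guarantee, so the two arguments coincide in substance.
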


\paragraph{Hypergraph matching.} Matching problems in $k$-uniform hypergraphs are well-studied algorithmically. For any fixed $\varepsilon > 0$ the best known approximation ratios are $\frac{k}{2} + \varepsilon$ for the unweighted version by Hurkens and Schrijver~\cite{HurSch89} and $\frac{k+1}{2} + \varepsilon$ for the weighted version by Berman~\cite{Ber00}.  In the case $k=3$, the algorithmic results of \cite{ChaLau11} give an $\varepsilon$-improved approximation ratio of 2 for 3-dimensional matching.
On the other hand, Hazan, Safra and
Schwartz~\cite{HazanEtAl-CC06} showed that the problem is hard to approximate within a factor of $\Omega(\frac{k}{\log k})$ unless $\PP = \NP$, even in the $k$-dimensional case.

\paragraph{Hypergraph $b$-matching.} For $b$-matching in $k$-hypergraphs, Krysta~\cite{Krysta-MFCS05} gave a greedy $k+1$-approximation for the simple case, and Young \& Koufogiannakis~\cite{YoungKouf-DISC09} gave a $k$-approximation for the uncapacitated version. Both of these approximation algorithms give LP-relative guarantees. An improvement in some cases was recently obtained by the \emph{$k$-exchange system} framework of Feldman et al.~\cite{FNSW11}. The $b$-matchings form a \emph{$k$-exchange system} (this is explicit only for $k=2$ in \cite{FNSW11}). In this way one can obtain a local search-based $(\frac{k+1}{2}+\varepsilon)$-approximation algorithm for weighted $k$-hypergraph $b$-matching. However, its running time is exponential in $k$ and it does not give any LP-relative guarantee.

It may be tempting to think that the $b$-matching problem in hypergraphs is a simple extension of $1$-matching in hypergraphs because the theory and algorithms for $b$-matching in graphs closely relate to those for $1$-matchings.  As evidenced by the results above, this does not appear to be the case.  An approximation algorithm that runs in time polynomial in $k$ with guarantee better than $k$ for $k$-hypergraph $b$-matching had been an open problem that we resolve with this work.  Our methods are LP-based, whereas local search seems to give the best known results; however, the bounding techinques used in local search for hypergraph $1$-matching do not seem to readily extend to the hypergraph $b$-matching case.  For example, Arkin and Hassin~\cite{ArkHas98} give a local search $(k-1+\varepsilon)$-algorithm for weighted $k$-hypergraph $1$-matching; however, as a warmup they present a trivial bound of $k$ --- even this trivial bound does not easily extend to the $k$-hypergraph $b$-matching case. 

\paragraph{Other work.}
Pseudo-greedy methods similar to iterated packing have been successfully applied to several packing and coloring problems, including multicommodity flows on trees~\cite{ChekuriEtAl-TALG07}, independent sets in $t$-interval graphs~\cite{BarYehudaEtAl-SICOMP06}, and weighted edge coloring of bipartite graphs~\cite{FeiSin08}.  Iterated packing is a means of obtaining an approximate convex decomposition; Carr and Vempala~\cite{CarrVempala-RSA02} have shown a strong connection between the latter and approximation ratios of LP-based approximation algorithms.


As mentioned earlier, a $2k$-approximation for $k$-hypergraph demand matching is known~\cite{Parekh11}; a better ratio of 3 is possible when $k=2$~\cite{Parekh11}. These nearly match (exactly match, when $k=2$) the best known lower bound of $2k-1$~\cite{BansalEtAl-IPCO10} on the integrality gap of the natural LP relaxation (this construction does not require that $k-1$ is a prime power). Bansal et al.~\cite{BansalEtAl-IPCO10} devised a deterministic $8k$-approximation and a randomized $(ek + o(k))$-approximation for the more general problem of approximating $k$-column-sparse packing integer programs.

Stamoulis very recently introduced the bounded-color matching problem (defined above in the more general hypergraph context) and devised a 2-approximation~\cite{Sta14}.  This result is also based on iterated packing.  Stamoulis observes that the bounded-color matching problem is a special case of $3$-hypergraph $b$-matching.  In fact it is suggested in this paper that a polynomial-time $(k-1+\frac{1}{k})$-approximation for $k$-hypergraph $b$-matching may be possible.  Our work was developed independently, and we observe that our results generalize Stamoulis's results, since the special hypergraph $b$-matching instances obtained by the reduction he suggests are bipartite, and we are able to leverage Theorem~\ref{theorem:2} to give a $k$-approximation for the more general bounded-color $k$-hypergraph $b$-matching problem, which we introduce here.

We will exploit the interplay between LP-relative approximation algorithms and convex decompositions --- an equivalence between the two was shown by Carr \& Vempala~\cite{CarrVempala-RSA02}. The Lavi-Swamy~\cite{LS05} mechanism combines techniques from \cite{CarrVempala-RSA02} with the VCG mechanism.

We give an overview of iterated packing in the next section.  There, we also introduce a structure theorem from \cite{ChaLau11} and its specialization to bipartite instances, versions of which will be used throughout the paper. Next, to further introduce the iterated packing methodology, we give an iterated packing proof of the same result, although it does not run in polynomial time.
This is extended to $b$-matching in Section~\ref{sec:hbm}, which contains our main technical innovations. First an existential proof is given (Algorithm~\ref{alg:bmatchsimple}) and then finally Algorithm~\ref{alg:last} proves Theorems~\ref{theorem:1} and~\ref{theorem:2} constructively. Then in Section~\ref{sec:hdm} we present the proof of Theorem~\ref{theorem:3}, which is based on the local ratio method.


\section{Iterated Packing Overview}\label{sec:hm}

The notion of an approximate convex decomposition is essential to iterated packing, as the latter iteratively builds such a decomposition for a given fractional solution.  Here we present a slightly different notion of an approximate convex decomposition than usually considered.

\begin{definition}
For $\alpha \ge 1,$ define \emph{$\alpha$-convex multipliers} to be any collection of nonnegative reals whose sum is $\alpha$. Likewise, we say that $x$ is an \emph{$\alpha$-convex combination} of the points $\{x^i\}_i$ if there are $\alpha$-convex multipliers $\{\lambda_i\}_i$ so that $x = \sum_i \lambda_i x^i$.
\end{definition}
\noindent
The utility of $\alpha$-convex combinations is that they provide a convenient way to talk about integrality gaps without rescaling as was done in \cite{CarrVempala-RSA02} or \cite{Parekh11}.
\begin{proposition}[\cite{CarrVempala-RSA02}]\label{prop:cvsimple}
If every feasible LP solution for a packing program can be written as an $\alpha$-convex combination of integral feasible solutions, then
its integrality gap is at most $\alpha$.
\end{proposition}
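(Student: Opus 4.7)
The plan is a short averaging argument, which is really all this direction of the Carr--Vempala equivalence requires. Fix any weight vector $w \ge 0$ (nonnegativity comes for free in the packing setting) and let $x^*$ be any feasible LP solution; I want to exhibit an integral feasible solution of value at least $(w \cdot x^*)/\alpha$. By hypothesis, write $x^* = \sum_i \lambda_i x^i$ where each $x^i$ is integral and feasible and $\{\lambda_i\}$ are $\alpha$-convex multipliers, i.e.\ $\lambda_i \ge 0$ and $\sum_i \lambda_i = \alpha$. Then linearity of the objective gives
\[
w \cdot x^* \;=\; \sum_i \lambda_i \,(w \cdot x^i).
\]

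Since the $\lambda_i$ are nonnegative and sum to $\alpha$, an averaging argument produces some index $i^\star$ with $w \cdot x^{i^\star} \ge (w \cdot x^*)/\alpha$. The point $x^{i^\star}$ is integral and feasible by assumption, so comparing it to the LP optimum (take $x^*$ to be optimal) bounds the integrality gap by $\alpha$.

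I expect no serious obstacle here: the proposition is essentially a restatement of the defining property of $\alpha$-convex combinations. The only subtlety worth flagging is the role of the packing hypothesis, which ensures $w \ge 0$ so that the averaging step respects the inequality direction and so that the feasible integral $x^i$ arising in the decomposition are indeed comparable to $x^*$ under the same constraints. The opposite direction of the Carr--Vempala correspondence (constructing an $\alpha$-convex decomposition from an LP-relative approximation) is the deeper statement, but is not needed for this proposition.
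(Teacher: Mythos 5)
Your argument is correct and is essentially the same as the paper's: the paper phrases the averaging step probabilistically (sample $x^i$ with probability $\lambda_i/\alpha$ and observe the expected weight is $w(x^*)/\alpha$), while you state it as a deterministic averaging argument, but these are the same idea. No gaps.
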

\begin{proof}
We need to show that for any nonnegative weight function $w$, if $x^*$ is the fractional solution that maximizes $w(x^*)$, then there is an integral solution of weight at least $w(x^*)/\alpha$. A random solution from the $\alpha$-convex combination representation of $x^*$, drawing $x^i$ with probability $\lambda_i/\alpha$, has expected weight $\sum_i \frac{\lambda_i}{\alpha} w(x^i) = w(x^*)/\alpha$. So one of the $x^i$ has at least this weight.
\end{proof}
(In fact \cite{CarrVempala-RSA02} also proves an algorithmic converse, used also by the Lavi-Swamy framework~\cite{LS05} underlying Corollary \ref{cor:auctions}.)

We will use Proposition~\ref{prop:cvsimple} as follows: we develop a polynomial-time algorithm to write fractional hypergraph $b$-matchings as $\rho$-convex combinations of feasible integral $b$-matchings. Then by Proposition~\ref{prop:cvsimple}, we get the LP-relative $\rho$-approximation algorithm claimed in Theorems \ref{theorem:1} and \ref{theorem:2}.

In \cite{Parekh11} the idea of iterated packing was introduced. Each iteration, called a \emph{packing step}, updates the current $\alpha$-convex combination to a new one, increasing some terms of the combination on one coordinate.
\begin{definition}[Packing step]
Let us be given an $\alpha$-convex combination $x = \sum_i \lambda_i x^i$ where the $x^i$ are feasible integral solutions, an edge $e$ to pack, and a target value $t \in \R_+$. We may think of a \emph{packing step} as packing the edge $e$ into some of the solutions $x^i$ such that each resulting solution is still feasible and that we have packed $e$ into solutions with a total mass of t, i.e. the sum of corresponding $\lambda_i$ is $t$.  

Let $\chi_e$ be the vector in $\R^E$ with coordinate 1 on $e$ and 0 elsewhere. A \emph{packing step} will replace some $0 \le \lambda'_i \le \lambda_i$ portion of each $x^i$ with $x^i + \chi_e$, where we allow $\lambda'_i > 0$ only when $x^i + \chi_e$ is feasible. Therefore $\sum_i (\lambda_i - \lambda'_i) \cdot x^i + \sum_i \lambda'_i \cdot (x^i + \chi_e)$, the result of the packing step, expresses $x + t\chi_e$ as an $\alpha$-convex combination of integer feasible solutions.  
\end{definition}
For a packing step to actually be feasible, it is clearly both necessary and sufficient that the set $P = \{i \mid x^i + \chi_e \textrm{ feasible}\}$ of solutions into which $e$ can be packed must satisfy $\lambda(P) \ge t$.

For the sake of polynomial-time implementation of our final algorithm, note we can ensure at most one $i$ has $\lambda'_i \notin \{0, \lambda_i\}$ in the above argument, so that each packing step increases the number of terms by at most one. Alternatively we could use Carath\'eodory's theorem which guarantees that any $\alpha$-convex combination can be rewritten as one with at most $d+1$ terms where $d$ is the number of coordinates.

The basic iterated packing formula starts with a fractional solution $x$ in hand and iteratively constructs an integral solution by starting with an empty hypergraph on $V$.  The edges are processed in some order, and for each edge $e$, a packing step is performed on $e$ with a target value of $x_e$.  One key fact about iterated packing is that when a target value is larger, packing is easier, hence iterated packing shows how large fractional values facilitate approximation for packing problems much like iterated packing does for covering problems.  The basic approach may be refined in several directions.  One may start with base integral solution that is non-empty hypergraph.  This was explored in~\cite{Parekh11} to derived an improved approximation for the demand matching problem.  Another improvement is to consider a specific ordering of edges. 

This key idea driving our algorithm is analyzing an ordering of edges which allows us to obtain a polynomial-time algorithm.  Although, as announced in~\cite{Parekh11}, extensions of ideas from ~\cite{Parekh11} may be used to derive an upper bound of $k-1+1/k$ on the integrality gap for the $k$-hypergraph $b$-matching problem, the bound is non-constructive and does not give a polynomial-time algorithm.  We show that by considering an ordering of edges that was first studied by Chan and Lau~\cite{ChaLau11}, we obtain a polynomial-time $k-1+1/k$-approximation.  This ordering is based on vertices of small degree in an extreme point solution, which in turns allows one to argue that there is an edge with large fractional value.  The lemma below shows that we can find a vertex of sufficiently small degree.

Let $\{A_{v,e}\}_{v, e}$ be the 0-1 incidence matrix for our $k$-hypergraph: it has rows for vertices and columns for edges, with at most $k$ ones per column. When $x^*$ is an extreme point solution to the matching LP $\{0 \le x \le 1 \mid Ax \le 1\}$, elementary properties of polyhedra show that the incidence matrix of $\{e \mid 0 < x^*_e < 1\}$ has linearly independent columns. This makes the following lemma useful: it was proven by Chan and Lau for the general case, while the bipartite case follows from generalizing their arguments about the $k$-dimensional case.
\begin{lemma}\label{lemma:count}
If the incidence vectors of $\varnothing \neq E' \subseteq E$ are linearly independent, then some vertex in $(V, E')$ has degree between 1 and $k$. In the bipartite case, the upper bound can be strengthened to $k-1$.
\end{lemma}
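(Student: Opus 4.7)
The plan is to combine a rank argument (coming from linear independence) with degree double counting, using an extra null-space observation to squeeze the bipartite bound from $k$ down to $k-1$.

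Let $V' \subseteq V$ denote the set of vertices of positive degree in $(V, E')$ and let $M$ be the incidence matrix with rows indexed by $V'$ and columns by $E'$. Every column of $M$ is supported on $V'$, so the hypothesis gives $\mathrm{rank}(M) = |E'|$ and hence $|V'| \ge |E'|$. For the general case, simply double count: $\sum_{v \in V'} \deg(v) = \sum_{e \in E'} |e| \le k|E'| \le k|V'|$, so the average degree in $V'$ is at most $k$. Since every vertex in $V'$ has degree at least $1$ by construction, some vertex has degree between $1$ and $k$.

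For the bipartite case, write $U' = V' \cap U$ and $W' = V' \setminus U$, and suppose toward a contradiction that every vertex of $V'$ has degree at least $k$. Since each $e \in E'$ contains exactly one vertex of $U$, the degree sum over $U'$ equals $|E'|$, while the degree sum over $W'$ is at most $(k-1)|E'|$. Combining these with the assumed lower bound $k$ on each degree yields $|U'| \le |E'|/k$ and $|W'| \le (k-1)|E'|/k$, so $|V'| \le |E'|$. Together with $|V'| \ge |E'|$, all inequalities are tight: $M$ is a square invertible matrix, every edge has size exactly $k$, and every vertex of $V'$ has degree exactly $k$.

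The main obstacle is extracting a contradiction from this extremal, perfectly regular structure, since pure counting can no longer help. The idea is to exhibit a forced linear dependence in the left kernel of $M$: one has $\mathbf{1}_{U'}^T M = \mathbf{1}_{E'}^T$ (each edge has exactly one $U$-vertex) and $\mathbf{1}_{W'}^T M = (k-1)\,\mathbf{1}_{E'}^T$ (each edge has exactly $k-1$ vertices in $W$), so the row vector $\mathbf{1}_{W'} - (k-1)\mathbf{1}_{U'}$ lies in the left kernel of $M$. Invertibility forces $\mathbf{1}_{W'} = (k-1)\mathbf{1}_{U'}$, but $U'$ and $W'$ are disjoint and, since $E' \neq \varnothing$, not both empty; this is impossible for $k \ge 2$, completing the proof. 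This null-space trick is the natural extension of the Chan--Lau $k$-dimensional argument, where the several partition classes $V_1,\dots,V_k$ supplied multiple copies of $\mathbf{1}_{E'}^T$ in the row span; here the single distinguished class $U$ together with the column-sum identity plays the analogous role.
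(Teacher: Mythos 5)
Your proof is correct and follows essentially the same route as the paper's: the same rank/degree double-counting argument for the general bound, and the same left-kernel vector $\mathbf{1}_{W'} - (k-1)\mathbf{1}_{U'}$ (the paper writes the analogous vector with $-(k-1)$ on $U$ and $1$ off $U$, with a typographical slip rendering it as $(-k-1)$) to force the contradiction in the bipartite case. Your write-up is a bit more explicit than the paper's terse ``examine the situation in which equality holds,'' spelling out that failure of the $k-1$ bound forces $|V'|=|E'|$, all edges of size $k$, and all degrees exactly $k$, but the underlying argument is identical.
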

\begin{proof}
The first part is a counting argument. The incidence matrix retains its rank if we delete the all-zero rows, leaving only those rows corresponding to the set $V'$ of vertices with nonzero degree. The number of such vertices must satisfy $|V'| \ge |E'|$ or else rank $|E'|$ could not be achieved. Since each column has at most $k$ unit entries, there are at most $k|E'|$ unit entries in the whole matrix. So averaging, some row has at most $k|E'|/|V'| \le k$ nonzeroes, and this gives the desired vertex.

For bipartite hypergraphs, examine the situation in which equality holds. This can only happen if $|E'|=|V'|$ and the matrix has exactly $k$ ones per row and per column. Let $U$ be the subset of vertices so that every hyperedge intersects $U$ exactly once. So, each hyperedge intersects the complement of $U$ exactly $k-1$ times. Therefore, the vector in $\R^V$ with $(-k-1)$ entries in $U$ and unit entries elsewhere is orthogonal to all rows, contradicting that the adjacency matrix has full rank.
\end{proof}
In order to talk about both the general and bipartite cases in a unified way, define
$$\rho:=\begin{cases}k-1+\tfrac{1}{k}&\textrm{ in the general case, and }\\k-1&\textrm{ in the bipartite case.}\end{cases}$$
Additionally, define the degree bound
$$\mu:=\begin{cases}k&\textrm{ in the general case, and }\\k-1&\textrm{ in the bipartite case.}\end{cases}$$

\section{Non-Polynomial Time Algorithm for $k$-Hypergraph Matching}\label{match-ip}

We now give an alternate proof that $k$-hypergraph matching has integrality gap of at most $k+1-\tfrac{1}{k}$. The algorithm behind this proof does not run in polynomial time. However, this section also introduces the notation and steps involved in iterated packing, which we will extend in the next section to get our main result.

\begin{lemma}[\cite{Parekh11}]\label{lemma:validpack}
In $k$-hypergraph matching, a packing step to bring $x$ to $x + t\chi_e$, where $x+t\chi_e$ is a feasible fractional solution, is possible if $\alpha \ge k - (k - 1)t$.
\end{lemma}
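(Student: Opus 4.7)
The plan is to pass from the failure condition of the packing step directly to the fractional feasibility of $x+t\chi_e$, via a double-counting argument over the $k$ vertices of $e$.

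Recall that the packing step succeeds iff $\lambda(P) \ge t$, where $P=\{i\mid x^i+\chi_e\text{ feasible}\}$. First I would unpack what it means for $i\notin P$: since each $x^i$ is an integral matching, infeasibility of $x^i+\chi_e$ forces some vertex $v\in e$ to already be saturated in $x^i$, i.e.\ $\sum_{f\ni v}x^i_f\ge 1$. Summing this trivial inequality over the (at most $k$) vertices of $e$ gives the key per-solution bound
\[
\sum_{v\in e}\sum_{f\ni v} x^i_f \;\ge\; 1\qquad\text{for every }i\notin P.
\]

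Next I would multiply by $\lambda_i\ge0$, sum over $i\notin P$, and swap the order of summation. Using $\sum_i \lambda_i x^i_f = x_f$ (with an $\alpha$-convex combination, so $\sum_i\lambda_i=\alpha$), this yields
\[
\alpha-\lambda(P)\;=\;\sum_{i\notin P}\lambda_i \;\le\; \sum_{v\in e}\sum_{f\ni v} x_f \;=\; \sum_{v\in e} x(\delta(v)).
\]
Now I would invoke the hypothesis that $x+t\chi_e$ is fractionally feasible: for every $v\in e$ this gives $x(\delta(v))+t\le 1$, hence $\sum_{v\in e}x(\delta(v))\le k(1-t)$. Combining the two estimates,
\[
\lambda(P)\;\ge\;\alpha - k(1-t)\;=\;\alpha - k + kt,
\]
so the condition $\alpha \ge k-(k-1)t$ is exactly what is needed to conclude $\lambda(P)\ge t$, proving that the packing step is feasible.

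There is no real obstacle here; the only point requiring a moment's care is the step where infeasibility of $x^i+\chi_e$ is charged to a saturated vertex of $e$, which uses both integrality of the $x^i$ and the fact that each hyperedge has at most $k$ vertices (so the sum over $v\in e$ has at most $k$ terms). This argument will be the template reused for the $b$-matching extension in the next section, where the analogous charging will track the slack $b_v-x(\delta(v))$ instead of $1-x(\delta(v))$.
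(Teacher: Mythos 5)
Your proof is correct and follows essentially the same approach as the paper: the paper defines, for each $v\in e$, the set $Q_v$ of solutions blocked at $v$, shows $\lambda(Q_v)=(Ax)_v\le 1-t$, and union-bounds over the $k$ vertices of $e$; your double-counting sum $\alpha-\lambda(P)\le\sum_{v\in e}x(\delta(v))\le k(1-t)$ is exactly that union bound written out explicitly. The accounting and the final inequality are identical.
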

\begin{proof}
Let $Q_v$, for each $v \in e$, be the set of solutions $i$ for which $x^i + \chi_e$ is not feasible. We have $\lambda(Q_v) \le 1-t$ since $x+t\chi_e$ is feasible\footnote{In detail, the solutions $x^i$ for $i \in Q_v$ have degree 1 at $v$, so by the definition of a convex combination $(Ax)_v = \lambda(Q_v)$, but $(Ax)_v \le 1 - t$ since, by feasibility, $1 \ge A(x+t\chi_e)_v = (Ax)_v + t$.}.
We need room (disjoint in the worst case) for all such $Q_v$, plus an additional $t$ to pack the new edge in solutions that permit it, giving the bound $k(1-t) + t = k - (k-1)t$.\end{proof}

We can indeed get large coordinates using the following strengthening of Lemma~\ref{lemma:count}.
\begin{lemma}\label{lemma:highvalue}
Any nonzero extreme point solution $x$ to the $k$-hypergraph matching polytope has some fractional coordinate at least $1/\mu$.
\end{lemma}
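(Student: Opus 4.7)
The plan is to prove Lemma~\ref{lemma:highvalue} by contradiction, driven by a double-counting argument applied to the tight constraint vertices of the extreme point. First I would let $E' = \{e : 0 < x_e < 1\}$ be the set of fractional edges; if $E' = \varnothing$, then because $x \ne 0$ some $x_e = 1 \ge 1/\mu$ and we are done. Otherwise, since $x$ is an extreme point, the tight constraints have full rank, so there is a set $V^* \subseteq V$ of tight vertices (i.e., $\sum_{e \ni v} x_e = 1$ for every $v \in V^*$) such that $A[V^*, E']$ has column rank $|E'|$. Discarding rows of zero degree in $E'$, I obtain $\tilde V := \{v \in V^* : \deg_{E'}(v) \ge 1\}$, still with $A[\tilde V, E']$ of column rank $|E'|$ and thus $|\tilde V| \ge |E'|$. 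For any $v \in \tilde V$, no integer edge incident to $v$ can satisfy $x_e = 1$ (or it would force every other edge at $v$ to have $x_e = 0$, contradicting $\deg_{E'}(v) \ge 1$), so the tightness at $v$ simplifies to $\sum_{e \ni v,\, e \in E'} x_e = 1$.

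Next I would assume for contradiction that $x_e < 1/\mu$ for every $e \in E'$. Combined with the identity above, each $v \in \tilde V$ has $\deg_{E'}(v) > \mu$, hence $\deg_{E'}(v) \ge \mu + 1$. A double-count then gives
\[
(\mu + 1)\, |\tilde V| \;\le\; \sum_{v \in \tilde V} \deg_{E'}(v) \;=\; \sum_{e \in E'} |e \cap \tilde V| \;\le\; k\,|E'|.
\]
In the general case $\mu = k$ this yields $|\tilde V| \le k|E'|/(k+1) < |E'|$, directly contradicting $|\tilde V| \ge |E'|$, so some $x_e \ge 1/k$.

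In the bipartite case $\mu = k-1$, the same inequality only gives $|\tilde V| \le |E'|$, and hence $|\tilde V| = |E'|$ with equality throughout. This forces $\deg_{E'}(v) = k$ for every $v \in \tilde V$ and $|e \cap \tilde V| = k$ for every $e \in E'$, meaning every vertex of every fractional edge is tight; in particular the unique $U$-vertex of each edge in $E'$ belongs to $\tilde V$, so $(\tilde V, E')$ is itself bipartite with distinguished set $\tilde V \cap U$. At this point I would invoke the orthogonality construction from the proof of Lemma~\ref{lemma:count}: the vector $w \in \R^{\tilde V}$ defined by $w_v = -(k-1)$ on $\tilde V \cap U$ and $w_v = 1$ elsewhere is a nonzero element of the left null space of $A[\tilde V, E']$ (since each edge contributes $-(k-1) + (k-1) = 0$), contradicting that this matrix has column rank equal to $|\tilde V|$. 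The main obstacle is precisely this bipartite case: the double-count alone only yields $|\tilde V| \le |E'|$, and one must exploit the equality regime to recover the full bipartite structure and re-run the linear-algebraic argument behind Lemma~\ref{lemma:count}.
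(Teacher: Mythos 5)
Your proof is correct and essentially matches the paper's argument: both extract a full-rank tight-vertex submatrix $A[V^*,E']$ from the extreme point, run the counting argument of Lemma~\ref{lemma:count} (including the bipartite orthogonality refinement) to locate a tight vertex of degree at most $\mu$, and conclude by averaging $x(\delta(v))=1$ over its incident edges. You merely recast this as a single proof by contradiction and inline the double-count rather than citing Lemma~\ref{lemma:count}; the substance is the same.
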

\begin{proof}
If $x_e = 1$ for any coordinate then we are done, so suppose otherwise. We know from elementary linear algebra that there is a set $V''$ of vertices and a set $E''$ of edges so that $x$ is the unique solution to $x_e = 0, \forall e \notin E''; x(\delta(v))=1, \forall v \in V''$. Then the same counting argument as in Lemma~\ref{lemma:count} (resp.~and the same linear independence in the bipartite case) ensures that some $v \in V''$ is incident on at most $k$ (resp.~$k-1$) edges. Since it has $x(\delta(v))=1$ the $e \in \delta(v)$ maximizing $x_e$ satisfies the lemma.
\end{proof}

Using this, we obtain an iterated packing algorithm for the $k$-hypergraph matching problem, which is displayed as Algorithm~\ref{alg-hmcc}.
Note that this algorithm is presented as a recursive top-down variant of iterated packing, while the basic version in the previous section was presented as a bottom-up algorithm for ease of exposition.  Another more crucial deviation of this algorithm from the basic iterated packing formula is that since our analysis requires an extreme point, we must express each non-extreme solutions as convex combinations of extreme points, and we use that:
\begin{equation}\textrm{A convex combination of $\alpha$-convex combinations is an $\alpha$-convex combination.}\label{eq:combcomb}\end{equation}
In fact this is the reason the algorithm is not guaranteed to run in polynomial time; however, the algorithm does terminate since the number of nonzero coordinates of $x$ decreases in each recursive call.

\begin{algorithm}
\caption{$\textrm{HM}^*(V, \E, x)$ \comment{write $x$ as $\rho$-convex comb.~of 0-1 solutions}}\label{alg-hmcc} \label{fig:alg2}
\begin{algorithmic}[1]
\State If $x = \mathbf{0}$ return the trivial $\rho$-convex combination $\lambda_1 = \rho, x_1 = \mathbf{0}$.
\State If $x$ is not an extreme point solution to $\{x \in \R_+^E \mid Ax \le 1\}$,
\State $\quad$ Write $x$ as a convex combination of extreme point solutions.
\State $\quad$ Recurse on each extreme point and return their result combined via \eqref{eq:combcomb}.
\State Pick $e$ so that $x_e$ is maximized and let $x'$ be $x$ except with $x'_e$ set to zero.
\State Recurse: $(x^i, \lambda_i)_i := \textrm{HM}^*(V, E, x')$.
\State Packing step: pack $x_e$ of $e$ into $(x^i, \lambda_i)_i$ and return the result.
\end{algorithmic}
\end{algorithm}

\begin{proposition}\label{prop:isroom}
Given any LP solution $x$, Algorithm~\ref{fig:alg2} returns an expression of $x$ as a $\rho$-convex combination of integral solutions.
\end{proposition}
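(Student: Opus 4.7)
The plan is to prove Proposition~\ref{prop:isroom} by strong induction on $|\supp(x)|$, handling the extreme and non-extreme cases in tandem at each induction level. The base case $x = \mathbf{0}$ is immediate: the algorithm returns $\lambda_1 = \rho$, $x^1 = \mathbf{0}$, which is a valid $\rho$-convex combination of $\mathbf{0}$ since its single coefficient sums to $\rho$.

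For the inductive step, I would first dispatch the extreme case. If $x$ is a nonzero extreme point, Lemma~\ref{lemma:highvalue} guarantees some coordinate $x_e \ge 1/\mu$, and since the algorithm chooses $e$ maximizing $x_e$, this choice satisfies $x_e \ge 1/\mu$. The recursive call is made on $x' = x - x_e \chi_e$, which is a feasible LP solution with strictly fewer nonzero coordinates, so the inductive hypothesis yields a $\rho$-convex combination of $x'$ into integral feasible solutions. Lemma~\ref{lemma:validpack} shows the subsequent packing step of $e$ with target $x_e$ succeeds provided $\rho \ge k - (k-1)x_e$; I would verify this inequality is exactly tight at $x_e = 1/\mu$ in both regimes: in the general case $k - (k-1)/k = k - 1 + \tfrac{1}{k} = \rho$, and in the bipartite case $k - (k-1)/(k-1) = k - 1 = \rho$. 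Hence the packing step succeeds and produces the desired $\rho$-convex combination of $x = x' + x_e \chi_e$.

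For the non-extreme case, the algorithm decomposes $x = \sum_j c_j y_j$ as a convex combination of extreme LP solutions $y_j$, each with $\supp(y_j) \subseteq \supp(x)$. For each $y_j$ the algorithm recurses; by the extreme-case reasoning just established (applied either at the current level or at a smaller one, depending on $|\supp(y_j)|$), each recursive call returns a valid $\rho$-convex combination of $y_j$ into integral feasible solutions. Gluing these via \eqref{eq:combcomb} gives the required representation of $x$.

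The main technical obstacle is that extreme-point decomposition does not strictly reduce $|\supp(x)|$, so the naive induction does not directly apply to the recursive calls on the $y_j$. I would resolve this by carefully ordering the inductive step: at level $n$, first establish the extreme case from the IH at level $n-1$ (the recursion there genuinely drops a coordinate), and only then deduce the non-extreme case at level $n$ by invoking the just-proved extreme case on each $y_j$. This keeps the argument well-founded and simultaneously yields termination, since any recursive call either strictly decreases the support or is followed by one that does.
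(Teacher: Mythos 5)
Your proof is correct and follows the paper's approach exactly: combine Lemma~\ref{lemma:highvalue} (extracting a coordinate $x_e \ge 1/\mu$ of an extreme point) with Lemma~\ref{lemma:validpack} (feasibility of the packing step whenever $\rho \ge k - (k-1)x_e$) and verify the arithmetic $k-(k-1)/\mu = \rho$ in both the general and bipartite regimes. Your extra care about well-foundedness of the induction---proving the extreme case at support size $n$ from the IH at $n-1$ before deducing the non-extreme case at $n$---is a sound refinement of a point the paper's one-line proof and its termination remark leave implicit.
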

\begin{proof}
This follows from Lemmas \ref{lemma:validpack} and \ref{lemma:highvalue}, since if $\mu = k$ we have $k-(k-1)/\mu = k-1+\tfrac{1}{k} = \rho$, and if $\mu = k-1$ we have $k-(k-1)/\mu = k-1 = \rho$.
\end{proof}
\noindent
This completes the non-polynomial time iterated packing proof that the integrality gap for matching is at most $\rho$. Next, we extend it to $b$-matching.

\section{Iterated Packing and $k$-Hypergraph $b$-Matching}\label{sec:hbm}
In this section, which contains the main new iterated packing technique, we build on the ideas from the previous section. We begin with a non-constructive iterated packing algorithm to show that the integrality gap for $k$-hypergraph $b$-matching is at most $\rho$. Then, we move to a constructive version via iterated packing that runs in polynomial time. 

By Observation~\ref{obs:caps}, we assume unit capacities (simple $b$-matching).
We will use the following statement, whose proof is analogous to Lemma~\ref{lemma:highvalue}.
\begin{lemma}
Any nonzero extreme point solution $x$ to the $k$-hypergraph $b$-matching polytope has some fractional coordinate at least $1/\mu$.
\end{lemma}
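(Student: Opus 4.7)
The plan is to mirror the proof of Lemma \ref{lemma:highvalue}, with one extra bookkeeping step to deal with the fact that in $b$-matching a single vertex $v$ can have several incident edges with $x_e=1$ simultaneously (which is impossible in ordinary matching under the hypothesis that no coordinate is 1). As before, if some coordinate $x_e$ already equals $1$ the conclusion is trivial, so I will assume $0\le x_e<1$ for all $e$.

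Next I invoke the standard extreme point characterization. Partition the edges as $E=E_1\cup E''\cup E_0$ according to whether $x_e=1$, $0<x_e<1$, or $x_e=0$. By the extreme-point property there is a subset $V''\subseteq V$ of tight vertices such that $x$ is the unique solution to $x_e=0$ for $e\in E_0$, $x_e=1$ for $e\in E_1$, and $(Ax)_v=b_v$ for $v\in V''$. Eliminating the fixed coordinates, this forces the columns of the incidence matrix of $(V'',E'')$ to be linearly independent. Lemma \ref{lemma:count} (the general case, and its bipartite strengthening) then furnishes a vertex $v\in V''$ whose degree in $E''$ is at most $\mu$.

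Now I argue that $v$ witnesses a large fractional coordinate. Write $d:=|\delta(v)\cap E''|\in\{1,\dots,\mu\}$ and $a:=|\delta(v)\cap E_1|$. The tight constraint at $v$ reads
\[
 b_v \;=\; (Ax)_v \;=\; a \,+\, \sum_{e\in\delta(v)\cap E''} x_e,
\]
so $\sum_{e\in\delta(v)\cap E''} x_e = b_v-a$. Since $v\in V''$ has $d\ge1$ neighbors in $E''$ and each contributes a strictly positive amount, this sum is positive; as $b_v$ and $a$ are integers, it is at least $1$. Averaging over the at most $\mu$ edges of $\delta(v)\cap E''$ shows that some $x_e\ge (b_v-a)/d\ge 1/\mu$, as desired.

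The main (mild) obstacle compared with Lemma \ref{lemma:highvalue} is simply noticing that the contribution of $E_1$-edges at $v$ must be integer-subtracted before averaging, so that the residual fractional mass at $v$ is still at least $1$. Everything else — the extreme-point characterization and the counting/bipartite linear-algebra arguments — is imported unchanged from the matching case.
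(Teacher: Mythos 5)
Your proof is correct and follows the same route the paper intends: handle $x_e=1$ trivially, use the extreme-point characterization to get a vertex set $V''$ and edge set $E''$ of strictly fractional coordinates with $A|_{V''\times E''}$ having linearly independent columns, invoke Lemma~\ref{lemma:count} (and its bipartite refinement) to find a $v\in V''$ of $E''$-degree at most $\mu$, and average over $\delta(v)\cap E''$ using the tight constraint $(Ax)_v=b_v\ge 1$. The only cosmetic quirk is that after you declare $x_e<1$ for all $e$, the set $E_1$ is empty and $a=0$, so the ``subtract the integer $E_1$-mass'' bookkeeping is vacuous --- it does no harm, but the paper's version simply reads off $\sum_{e\in\delta(v)\cap E''}x_e=b_v\ge 1$ directly.
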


The naive adaptation of iterated packing (Algorithm~\ref{fig:alg2}) to $b$-matching would involve writing the input as a convex combination of extreme point solutions to $\{x \in [0,1]^E \mid Ax \le b\}$, working with $\alpha$-convex combinations of integer 0-1 solutions to $Ax \le b$.
However, this approach is unworkable. When we try to mimic Lemma~\ref{lemma:validpack}, as $b$ gets larger, we cannot bound $\lambda(Q_v)$ by anything less than 1, giving an approximation ratio of $k$ or worse.

To fix this problem, we will enforce two additional conditions.
One of these conditions, the main driver of the new proof, is that the \emph{strengthened} degree bound $Ax^i \le \lceil Ax \rceil$ must hold in \emph{every} level of the recursion (rather than the unworkable requirement that solutions merely respect the final target degrees). The second condition is that the $\lambda$-mass of solutions meeting this strengthened bound with equality cannot be more than $\langle (Ax)_v \rangle$ (here $\langle \cdot \rangle$ denotes the fractional part), except in the degenerate case that $(Ax)_v$ is integral.
Intuitively (i) balances the number of edges packed at a vertex across the solutions $x^i$, avoiding the trouble that the naive approach would encounter in future iterations, while (ii) helps achieve (i) inductively. A \emph{modified packing step} is a packing step that, given a solution $(x,
\lambda)$ satisfying both of these properties, produces another $(x', \lambda')$ satisfying both of these properties.
Then the definition of the resulting algorithm, Algorithm~\ref{alg:bmatchsimple}, is as follows.

\begin{algorithm}
\caption{$\textrm{H}b\textrm{M}^*(V, \E, x)$ \comment{write $x$ as $\rho$-convex comb.~of special 0-1 solutions}} \label{alg:bmatchsimple}\label{fig:bmatchsimple}
\begin{algorithmic}[1]
\State If $x = \mathbf{0}$ return the trivial $\rho$-convex combination $\lambda_1 = \rho, x_1 = \mathbf{0}$. \comment{as before}
\State If $x$ is not an extreme point solution to $\{y \in [0,1]^E \mid Ay \le \lceil Ax \rceil\}$,
\State $\quad$ Write $x$ as a convex combination of extreme point solutions. \comment{as before}
\State $\quad$ Recurse on each extreme point; return their combination via \eqref{eq:combcomb}. \comment{as before}
\State Pick $e$ so that $x_e$ is maximized; let $x'$ be $x$ with $x'_e$ set to zero. \comment{as before}
\State Recurse: $(x^i, \lambda_i)_i := \textrm{H}b\textrm{M}^*(V, E, x')$. \comment{as before}
\State Modified packing step: pack $x_e$ of $e$ into $(x^i, \lambda_i)_i$ and return the result.
\end{algorithmic}
\end{algorithm}

We will prove by induction that the algorithm succeeds in finding packings meeting both conditions.
\begin{lemma}\label{lemma:modified}For any $0 \le x \le 1$, \textrm{H$b$M$^*$}($V, \E, x$) returns an expression of $x$ as a $\rho$-convex combination of 0-1 $x^i$ that satisfies (i) $Ax^i \le \lceil Ax \rceil$ for each $i$, and (ii) for every $v$ such that $(Ax)_v$ is non-integral, $\lambda(\{i \mid (Ax^i)_v = \lceil Ax_v \rceil\}) \le \langle (Ax)_v \rangle$.
\end{lemma}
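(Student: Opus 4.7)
The plan is to induct on the recursive depth of Algorithm~\ref{alg:bmatchsimple}. The base case $x=\mathbf{0}$ is immediate. In the non-extreme case, I would decompose $x$ as a convex combination $\sum_j \gamma_j x^{(j)}$ of extreme points of the "bracket polytope" $\{y \in [0,1]^E : \lfloor Ax \rfloor \le Ay \le \lceil Ax \rceil\}$, a sub-polytope of the one in Algorithm~\ref{alg:bmatchsimple}; this forces $(Ax^{(j)})_v \in [\lfloor (Ax)_v \rfloor, \lceil (Ax)_v \rceil]$. Recursing on each $x^{(j)}$ and combining via~\eqref{eq:combcomb}, property (i) holds since $\lceil Ax^{(j)}\rceil \le \lceil Ax \rceil$. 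For (ii) at a non-integer coordinate $v$ with ceiling $h$, each sub-combination $j$ contributes to $\lambda(\{i : (Ax^i)_v = h\})$ an amount at most $\gamma_j$ when $(Ax^{(j)})_v = h$ (by the averaging bound $M \le 1$ deduced from $\sum_i \lambda_{j,i}(Ax^{(j,i)})_v = h$ with all $(Ax^{(j,i)})_v \le h$), at most $\gamma_j \langle (Ax^{(j)})_v\rangle$ when $(Ax^{(j)})_v \in (h-1,h)$ (by inductive (ii)), and $0$ when $(Ax^{(j)})_v = h-1$ (by inductive (i)). A short arithmetic check using the bracket property and $\sum_j \gamma_j (Ax^{(j)})_v = (Ax)_v$ shows these contributions total $\langle (Ax)_v\rangle$.

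In the extreme case, Lemma~\ref{lemma:highvalue} gives an edge $e$ with $x_e \ge 1/\mu$. Let $x' := x - x_e\chi_e$; recursion produces $(x^i,\lambda_i)_i$ satisfying (i), (ii) for $x'$. The modified packing step must pack $x_e$ worth of $e$ into these solutions while preserving the invariants for $x$. Define $H_v := \{i : (Ax^i)_v = \lceil Ax' \rceil_v\}$ for each $v \in e$, and classify $v$: type A if $(Ax')_v$ is integer, type B1 if $(Ax')_v$ is non-integer with $\langle (Ax')_v\rangle + x_e \le 1$, and type B2 otherwise. To maintain (i), we must not augment $H_v$ for any B1 vertex. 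To maintain (ii), for each B2 vertex the augmented mass of $H_v$ must be at most $\langle (Ax)_v\rangle$. Inductive (ii) gives $|H_v| \le \langle (Ax')_v\rangle$ when $(Ax')_v$ is non-integer, while a direct averaging argument yields $|H_v| \le 1$ when $(Ax')_v$ is integer.

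Feasibility of the packing LP follows from an accounting: the mass that cannot be used is at most $\sum_{v\in B1}|H_v| + \sum_{v\in B2}\max(0, |H_v| - \langle (Ax)_v\rangle)$, and each summand is at most $1 - x_e$ (using the inductive $|H_v|$ bounds and the B2 budget formula $\langle (Ax)_v\rangle = \langle (Ax')_v\rangle + x_e - 1$). Hence unavailable mass is at most $k(1-x_e)$, leaving at least $\rho - k(1-x_e) \ge x_e$ feasible mass by the definitions of $\rho$ and $\mu$ together with $x_e \ge 1/\mu$. An explicit feasible packing then exists, e.g., by greedily augmenting clean solutions first and then drawing from B2 sets within their budgets. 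Verifying (i) and (ii) for the output is a routine case analysis: at $v \notin e$ both invariants transfer verbatim; at type A, B1, B2 vertices in $e$ the packing step's construction enforces exactly the desired bounds. The main obstacle will be this feasibility accounting for the modified packing, where overlapping hard (B1) and soft (B2) constraints over up to $k$ vertices must be reconciled with two different inductive $|H_v|$ bounds; the bracket choice in the non-extreme step is a secondary subtlety.
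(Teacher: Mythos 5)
Your handling of the extreme (packing) case is essentially the paper's argument with a relabeled case split. The paper's cases (I)/(II)/(III) become your types A/B1/B2: type~A absorbs both the paper's trivial case $(Ax')_v=0$ and the sub-case of (III) where $(Ax')_v$ is a positive integer; type~B1 is the paper's case (II); type~B2 is the paper's case (III) with both quantities non-integral. Both proofs bound $\lambda(Q_v)\le 1-x_e$ for each $v\in e$, apply a union bound, and conclude via $\rho \ge x_e + k(1-x_e) = k-(k-1)x_e$ using $x_e\ge 1/\mu$. Your B2 budget $\max(0,\lambda(H_v)-\langle (Ax)_v\rangle)$ is a slightly tighter accounting than the paper's explicit $\lambda(Q_v)=\min(\lambda(Q'_v),1-x_e)$, but the net effect is the same.

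The genuine departure, and the more interesting one, is in the non-extreme case. The paper dismisses it in one sentence (``easy; while the extreme points ... may have smaller values for $\lceil Ax\rceil$, this does not hurt us''), decomposing $x$ over $\{y : Ay\le\lceil Ax\rceil,\ 0\le y\le 1\}$. You instead decompose over the \emph{bracket polytope} $\{y : \lfloor Ax\rfloor \le Ay \le \lceil Ax\rceil,\ 0\le y\le 1\}$, adding a floor constraint the paper's algorithm does not impose, and then verify invariant~(ii) by an explicit arithmetic argument. This extra constraint is not cosmetic: without it the claim can fail. Concretely, take two parallel edges with $x=(0.9,0.6)$ so that $(Ax)_v=1.5$ and $\langle(Ax)_v\rangle=0.5$. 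The decomposition $0.6\cdot(1,1)+0.3\cdot(1,0)+0.1\cdot(0,0)$ uses extreme points of the paper's polytope, yet the piece $(0,0)$ has degree $0<\lfloor 1.5\rfloor$, and an allowed recursion on $(1,1)$ (for which (ii) is vacuous since its degree is the integer $2$) can place its full unit of mass at degree $2$, giving a combined mass of $0.6>0.5$ at the ceiling degree — a violation of (ii). Restricting to the bracket polytope rules out the degree-$0$ piece, and the arithmetic you sketch (contributions $\gamma_j\langle\alpha_j\rangle$ for $\alpha_j\in(h-1,h)$, at most $\gamma_j$ when $\alpha_j=h$, $0$ at $\alpha_j=h-1$, summing to $\langle(Ax)_v\rangle$) goes through cleanly. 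So your decomposition step is not just a different route: it supplies a choice of convex decomposition without which the paper's one-line treatment of the non-extreme case does not actually establish the lemma. One minor imprecision in your write-up: you cite Lemma~\ref{lemma:highvalue}, which is stated for the matching polytope; the paper uses the analogous (unnumbered) $b$-matching variant here, and the relevant polytope is $\{y\in[0,1]^E : Ay\le\lceil Ax\rceil\}$ rather than the original LP polytope, but the counting argument transfers.
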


For the proof, it is helpful to realize that we use $(Ax)_v$ interchangeably as $x(\delta(v))$, and that it represents the ``degree" of $x$ at $v$. 

\begin{proof}
The base case and the non-extreme case are easy; while the extreme points decomposing a non-extreme solution may have smaller values for $\lceil Ax \rceil$, this does not hurt us. So we only need to deal with the case that $x$ is extreme and nonzero, where $e$ is chosen with $x_e \ge 1/\mu$.

To prove that the modified packing step can always be carried out while satisfying (i) and (ii), we again bound a set of unpackable solutions. Specifically, our goal will be to define sets $Q_v$ for each $v \in e$ such that any packing step that avoids adding $e$ to any of the solutions $\bigcup_{v \in e}Q_v$ will satisfy (i) and (ii) for $x$, and such that the sets $Q_v$ are $\lambda$-small enough that $e$ always has room to be added.

For each $v \in e$, there are three cases, the main distinction being whether $\lceil (Ax')_v \rceil = \lceil (Ax)_v \rceil$. Note that these terms are either equal, or differ by one.
\begin{itemize}
\item
Case (I), $(Ax')_v = 0$. This packing is trivial, set $Q_v = \varnothing$.
\item
 Case (II), $\lceil (Ax)_v \rceil = \lceil (Ax')_v \rceil \neq 0$. Proving (ii) is vacuous when $(Ax)_v$ is integral, and otherwise it follows easily by induction since $\langle (Ax)_v \rangle = \langle (Ax')_v \rangle + x_e$ and at most $x_e$ of $\lambda$-mass of solutions will have its degree increased at $v$. To show (i) is satisfied inductively, just like in Section~\ref{match-ip}, define $Q_v$ to be the set of $i$ with $(Ax^i)_v = \lceil (Ax)_v \rceil$; $e$ can be added to any other $x^i$ without violating the degree constraint. The terms $(Ax)_v$ and $(Ax')_v$ differ by $x_e$ and have the same integer ceiling, so by induction on (ii), we have the bound $\lambda(Q_v) \le \langle x'(\delta(v)) \rangle \le 1-x_e$ showing that $Q_v$ is not too big. This bound will be used later.
\item
Case (III), $\lceil (Ax)_v \rceil  = 1+ \lceil (Ax')_v \rceil$ and $(Ax')_v \neq 0$. Then satisfying (i) at $v$ is easy
 (since all $x^i$ have degree at most $\lceil (Ax')_v \rceil$ at $v$) but we must design $Q_v$ so that (ii) is satisfied after the packing step.

 If $(Ax)_v$ is integral any packing works (we can take $Q_v=\varnothing$), so assume the opposite.
Moreover, when $(Ax')_v$ is integral, by (i) all solutions $x^i$ have degree less than $\lceil (Ax)_v \rceil$ at each $v \in e$, and since we are only packing $x_e = \langle (Ax)_v \rangle$ amount of $e$, (ii) is also satisfied by any possible packing.

 Hence, assume both $(Ax')_v$ and $(Ax)_v$ are non-integral.  If we pack $e$ arbitrarily, the total weight of new solutions with degree $\lceil (Ax)_v \rceil$ at $v$ could be too large to satisfy (ii).
 Therefore, we will define $Q_v$ to exclude some subset of the solutions $Q'_v := \{i \mid (Ax^i)_v = \lceil (Ax')_v \rceil\}$ that could rise to have this degree. We have $\lambda(Q'_v) \le \langle (Ax')_v \rangle$ from (ii) inductively.
 We now define $Q_v$ to be some subset of $Q'_v$ with $\lambda(Q_v) = 1-x_e$. This is not possible if $\lambda(Q'_v) < 1-x_e$ but in this case we just define $Q_v := Q'_v$. Also, even if no subset of $Q'_v$ has $\lambda$-value \emph{exactly} $1-x_e$ we can split\footnote{Splitting means to replace the term $(x^i, \lambda^i)$ with two terms $(x^i, p), (x^i, \lambda^i-p)$ with distributed $\lambda$-mass on the same integer solution $x^i$.} a term of the $\rho$-convex decomposition to achieve this. The point of this $Q_v$ is that, using (ii) inductively, the post-packing total $\lambda$-value of the solutions with degree $\lceil (Ax)_v \rceil$ at $v$ will be at most $\lambda(Q'_v \bs Q_v) \le \langle (Ax')_v \rangle - (1-x_e) =  \langle (Ax)_v \rangle$; the latter equality holds since $(Ax)_v = (Ax')_v + x_e$ and by the hypotheses of this case. So these $Q_v$ allow us to inductively satisfy (i) and (ii), on top of which $\lambda(Q_v) \le 1-x_e$.
\end{itemize}
In all cases, $\lambda(Q_v) \le 1-x_e$. Analogous to Lemma~\ref{lemma:validpack} there is enough room to complete the packing step so long as $\rho \ge x_e + \lambda(\bigcup_{v \in e}Q_v)$. By a union bound this would be implied by $\rho \ge x_e + k(1-x_e)$. This gives the same analysis as before (Proposition~\ref{prop:isroom}) in terms of our bounds on $x_e$ and $\rho$, so the modified packing step succeeds and we are done.
\end{proof}

\subsection{Polynomial-Time Iterated Packing for $k$-Hypergraph $b$-Matching}
Finally, we give our main algorithm. It uses modified packing steps and always maintains a $\rho$-convex combination satisfying the conditions of Lemma~\ref{lemma:modified}. As usual, the core algorithm H$b$M (Algorithm~\ref{fig:last}) operates on solutions where the incidence matrix $A$ is of full column rank.

\begin{algorithm}
\caption{H$b$M($V, \E, x$) \comment{write $x$ as $\rho$-convex comb.~of 0-1 solutions}}\label{alg:last}\label{fig:last}
\begin{algorithmic}[1]
\Require $A$ has its columns linearly independent
\State If $x = \mathbf{0}$ return the trivial $\rho$-convex combination $\lambda_1 = \rho, x_1 = \mathbf{0}$.
\State Pick a vertex $\widehat{v}$ with minimum nonzero degree.
\State Pick $e \in \delta(\widehat{v})$ such that $x_e$ is maximized.
\State Recurse: $(x^i, \lambda_i)_i := \mathrm{H}b\mathrm{M}(V, E \bs \{e\}, x|_{E \bs \{e\}})$.
\State Extend each $x^i$ back to $\R^E$ by setting the $e$-coordinates to 0.
\State Modified packing step: pack $x_e$ of $e$ into $(x^i, \lambda_i)_i$ and return the result.
\end{algorithmic}
\end{algorithm}

\begin{lemma}
If $0 < x < 1$ and the columns of the incidence matrix $A$ are linearly independent, H$b$M expresses $x$ as a $\rho$-convex combination of $0$-$1$ solutions satisfying the same properties as Lemma~\ref{lemma:modified}.
\end{lemma}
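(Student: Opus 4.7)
My plan is to induct on $|E|$. The base case arises when the recursive call reaches $x = \mathbf 0$ (as it must once every coordinate has been handled): the algorithm returns the trivial $\rho$-convex combination $\lambda_1 = \rho, x_1 = \mathbf 0$, and both properties (i) and (ii) of Lemma~\ref{lemma:modified} hold vacuously since $A\mathbf 0$ is integral. For the inductive step, I first identify the vertex $\hat v$ selected by the algorithm: linear independence of the columns of $A$ permits us to apply Lemma~\ref{lemma:count}, guaranteeing some vertex of combinatorial degree in $[1,\mu]$, so the minimum nonzero combinatorial degree $d := |\delta(\hat v)|$ is at most $\mu$.

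The crucial step is the lower bound $x_e \ge 1/\mu$ on the chosen edge $e \in \delta(\hat v)$ of maximum $x_e$. Here I read the precondition as the standard extreme-point certificate for $\{y \in [0,1]^E \mid Ay \le \lceil Ax \rceil\}$: with $0 < x < 1$, no box constraint is tight, so the tight $b$-constraints alone must pin $x$ down, and restricting $A$ to those rows must still yield full column rank. The vertex $\hat v$ identified above must therefore carry a tight constraint $(Ax)_{\hat v} = \lceil (Ax)_{\hat v} \rceil \ge 1$, and averaging over the $d \le \mu$ edges in $\delta(\hat v)$ gives $x_e \ge 1/\mu$, mirroring the argument of Lemma~\ref{lemma:highvalue}.

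Having secured the edge $e$ with $x_e \ge 1/\mu$, the recursion on $(V, E \bs \{e\}, x|_{E \bs \{e\}})$ preserves both preconditions: deleting column $e$ preserves linear independence, and $x|_{E \bs \{e\}}$ remains in $(0,1)^{E \bs \{e\}}$. By induction the recursive call returns a $\rho$-convex combination $(x^i,\lambda_i)_i$ of $0$-$1$ solutions satisfying (i) and (ii) relative to $\lceil A x|_{E \bs \{e\}}\rceil$. After extending each $x^i$ to $\R^E$ with zero $e$-coordinate, I invoke the modified packing step with target $x_e$. The Case I--III analysis of Lemma~\ref{lemma:modified} still yields $\lambda(Q_v) \le 1 - x_e$ for every $v \in e$, so a union bound over $|e| \le k$ vertices requires only that $\rho \ge x_e + k(1 - x_e)$, which is equivalent to $x_e \ge 1/\mu$ and hence already established. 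The same case analysis shows the resulting combination still satisfies (i) and (ii) at $x$, completing the induction.

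The main obstacle is certifying $x_e \ge 1/\mu$ from the precondition alone; everything else is a direct translation of the analysis already developed in Lemmas~\ref{lemma:count},~\ref{lemma:highvalue}, and~\ref{lemma:modified} into the recursive framework. I expect the cleanest way to handle this obstacle is to note that at every recursive level the algorithm implicitly maintains an extreme-point setup --- deletion of a single column preserves linear independence, and the tight-row submatrix inherits this property --- so the counting argument of Lemma~\ref{lemma:count}, combined with the integrality of $b$, bootstraps a low-degree vertex into a high-value edge at every level, making the modified packing step feasible with room to spare.
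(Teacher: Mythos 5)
Your proof has a genuine gap at exactly the step you yourself flag as crucial. The precondition of the lemma is only that the columns of $A$ are linearly independent and $0 < x < 1$ — it does \emph{not} say that $x$ is an extreme point of $\{y \in [0,1]^E \mid Ay \le \lceil Ax\rceil\}$. Full column rank is strictly weaker than ``the tight-row submatrix has full column rank,'' so nothing forces $(Ax)_{\hat v}$ to be integral. A one-edge example already kills the claim: a single hyperedge $e=\{u,v,w\}$ with $x_e = 0.01$ has full column rank and $0 < x < 1$, yet $x_e = 0.01 < 1/\mu$. Worse, even if the top-level $x$ \emph{is} an extreme point (it is, by the reduction in Observation~\ref{obs:caps}), this property is destroyed by the recursion: after deleting column $e$, every constraint at a vertex $v\in e$ that was tight becomes loose, so the recursive subinstances are generally not extreme points of their polytopes. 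Your inductive invariant ``extreme-point setup at every level'' simply does not hold, and the averaging argument for $x_e \ge 1/\mu$ collapses.

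The paper's proof takes the opposite tack: it does \emph{not} try to certify $x_e \ge 1/\mu$. It splits into cases. If $|e|<k$ or $x_e \ge 1/\mu$, the crude union bound $\lambda(Q) \le |e|(1-x_e)$ already suffices. The interesting remaining case is $|e|=k$ and $x_e < 1/\mu$, which forces $x(\delta(\hat v)) < 1$ by the choice of $e$ as the $x$-heaviest edge at the degree-$\le\mu$ vertex $\hat v$. In that regime the integral solutions $x^i$ have degree $0$ or $1$ at $\hat v$, so one gets the exact identity $\lambda(Q_{\hat v}) = x(\delta(\hat v)) - x_e$ — dramatically stronger than the generic $1-x_e$ — and combined with $x(\delta(\hat v)) \le \mu x_e$ and $x_e < 1/\mu$ this yields $x_e + \lambda(Q) \le \rho$. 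This refinement of the bound at the \emph{one} low-degree vertex $\hat v$ is the new idea that makes the polynomial-time algorithm go through, and it is exactly what your proposal is missing.
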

\begin{proof}
The proof is very similar to proof of Lemma~\ref{lemma:modified} (except we have linear independence instead of extremeness)
and we therefore re-use its notation and some of the observations therein. Our goal is to show that each modified packing step succeeds.
Write $Q$ for $\bigcup_{v \in e}Q_v$. For the modified packing step to succeed we need $\lambda(Q) + x_e \le \rho$ as before. We will use that $\lambda(Q_v) \le (1-x_e)$ for each $v$, which holds as in Lemma~\ref{lemma:modified}.

The first case we will handle is $|e|<k$. In this case, $\lambda(Q)+x_e \le |e|(1-x_e) +x_e \le (k-1)(1-x_e) + x_e \le k-1 \le \rho$, as needed. So we assume $|e|=k$.

Since Lemma~\ref{lemma:count} applies to our setting, the degree of $\widehat{v}$ is at most $\mu$. The next case we will handle is $x_e \ge 1/\mu$. In this case, $\lambda(Q) + x_e \le k(1-x_e) + x_e
= k - (k-1)x_e \le k - (k-1)/\mu = \rho$
(like the proof of Proposition~\ref{prop:isroom}).
So we may assume $x_e < 1/\mu$.

Likewise, by the definition of $\mu$, we may assume $x(\delta(\widehat{v})) < 1$, since otherwise we fall in to the previous case by our choice of $e$.

Since $x(\delta(\widehat{v})) < 1$, we can get an exact expression for $Q_{\widehat{v}}$ more specific than that given in the proof of Lemma~\ref{lemma:modified}. All solutions $x^i$ in the $\rho$-convex combination have degree 0 or 1 at $v$, and the latter are the ones in $Q_{\widehat{v}}$ (blocking $e$ at $\widehat{v}$), and so $\lambda(Q_{\widehat{v}}) = (Ax')_{\widehat{v}} = (Ax)_{\widehat{v}} - x_e = x(\delta({\widehat{v}})) - x_e$.
This complements the upper bounds $\lambda(Q_v) \le 1-x_e$ that hold for all other $v \in e$ with $v \neq \widehat{v}$. This lets us bound the amount of room needed for the modified packing step:
\begin{align*}x_e + \lambda(Q) &\le x_e + x(\delta({\widehat{v}})) - x_e + (k-1)(1-x_e) \\&\le \mu x_e + (k-1)(1-x_e)
= k-1+(\mu-k+1)x_e \\&\le k-1+(\mu-k+1)/\mu = k-(k-1)/\mu = \rho
\end{align*}
where the middle inequality used $x(\delta({\widehat{v}})) \le \mu x_e$ and the last used $x_e < \frac{1}{\mu}$.
\end{proof}

To complete the proofs of Theorems~\ref{theorem:1} and \ref{theorem:2}, we yet again use the approach of starting with an extreme point solution and fixing its integer part (like Observation~\ref{obs:caps}), recursing only on the residual $b$-matching problem, which has linearly independent rows and $0 < x < 1$.

\section{Application: Bounded-Color $k$-Hypergraph $b$-Matching}\label{sec:BCHM}
We observe that improved approximations for the bounded-color $k$-hypergraph $b$-matching problem, which is defined above Corollary~\ref{cor:BCHM}, follow directly from our results.  The specialization of this problem for the case of matchings in graphs was very recently introduced by Stamoulis~\cite{Sta14}, who gave a 2-approximation (note that Stamoulis had considered only matchings and not $b$-matchings).  This independent result also leverages a variant of iterated packing.  We give a $k$-approximation for the general case of bounded-color $k$-hypergraph $b$-matching, and thus extend the above result to hypergraphs as well as $b$-matchings.

Stamoulis observed that bounded-color matching is a special case of $3$-hypergraph $b$-matching: for each color class $E_i$, add a new vertex $c_i$ with capacity $w_i$.  Now replace each edge $\{u,v\}$ with a hyperedge $\{c_i, u, v\}$.  This precisely models the bounded-color matching problem.  An analogous reduction shows that bounded-color $k$-hypergraph $b$-matching is a special case of standard $(k+1)$-hypergraph $b$-matching.  To obtain our approximation, we simply observe that these special instances are bipartite, as the set $U$ consisting of all the $c_i$ vertices intersects every hyperedge exactly once.  This gives us a $k$-approximation since the instance under consideration is a $(k+1)$-hypergraph.

\section{Application: Allocations}\label{app:auctions}


We will take advantage of the Lavi-Swamy framework~\cite{LS05}, which is a fractional version of the well-known Vickrey-Clarke-Groves (VCG) mechanism. We cannot directly use VCG in this setting, because one of the steps in VCG is to compute the allocation which maximizes the total utility of all players, and this problem is \NP-complete in our setting for $t \ge 2,$ by a reduction from 3-dimensional matching. The main result of Lavi and Swamy is that once we have an \emph{LP-relative $\rho$-approximation algorithm} with respect to the natural LP, we can get a truthful-in-expectation mechanism, which also maximizes the expected overall utility within a factor of $\rho.$ Minimizing this factor means we are coming closer to a VCG-like mechanism, whereas allocating everyone the empty set is truthful but a bad approximation.

First we define the natural LP relaxation for the allocation problem. Let $x^i_S$ be a fractional indicator variable indicating whether player $i$ will win exactly the set $S$ of items. Then the LP requires that each player wins one set of items, and that each item is allocated at most once, fractionally. Write $v^i_S$ as the valuation of player $i$ for set $S$. Altogether the fractional allocation LP is:
\begin{equation}
\max \sum_{i, S} x^i_Sv^i_S : 0 \le x \le 1; \forall i \in [n]: \sum_S x^i_S = 1; \forall s \in [m]: \sum_i \sum_{S : s \in S} x^i_S \le 1.
\tag{\ensuremath{\mathcal{A}}}\label{eq:alloc}
\end{equation}

We assume the input to the mechanism is an explicit list from each bidder, consisting of their valuation for each set upon which they wish to put a positive bid. The number of variables and constraints in the LP is polynomial in the number of such bids. Although for constant $k$, any reasonable bid language or oracle can be used, since the number of sets of size $<k$ is polynomial and we can convert everything to an explicit list.

\begin{definition}An \emph{$\rho$-approximate truthful-in-expectation mechanism} for the allocation problem is a randomized algorithm of the following form. It takes the values $v$ as inputs; its outputs are a valid allocation of items to players together with prices $p_i$ charged to each player $i$. It has the following two properties. First, where $S(i)$ denotes the set of items allocated to player $i$, we have $\sum_i v^i_{S(i)}$ is at least $\sum_i v^i_{T(i)}/\rho$ for every valid allocation $T$. Second, for every fixed $v^{-i}$, a player who gives insincere valuations $\widehat{v}^i$ as their input, resulting in random variables $\widehat{p}, \widehat{S}$ compared to the original ones $p, S$, does not increase their expected net utility:
$$E[v^i_{\widehat{S}(i)} - \widehat{p}_i] \le E[v^i_{S(i)} - p_i].$$
Moreover, $0 \le E[p_i] \le E[v^i_{S(i)}]$ for all $i$.
\end{definition}

\begin{theorem}[Lavi-Swamy~\cite{LS05}]
Given a polynomial-time LP-relative $\rho$-approximation algorithm for an allocation problem, we can obtain a polynomial-time $\rho$-approximate truthful-in-expectation mechanism.
\end{theorem}
However, the allocation problem here is precisely bipartite $k$-hypergraph matching: for each bidder and each set of items they could win, create a set out of them all together, and this set has size at most $1+k-1 = k$; and each such hyperedge contains exactly one bidder, so the hypergraph is indeed bipartite. So our bipartite extension of the Chan-Lau theorem (Section~\ref{sec:hm}) applies and we are done. The LP-relative property is essential; the non-LP relative local search approach from~\cite{FNSW11} cannot be used with~\cite{LS05}.

\section{Local Ratio and $k$-Hypergraph Demand Matching}\label{sec:hdm}
We recommend \cite{BY+04,BYR05,BYRM06} for background on the local ratio method, including its relationship with the primal-dual method.
The heart of the local ratio approach is the following lemma:
\begin{lemma}[Local ratio lemma]
Let $x_{OPT}$ be the (unknown) optimal integral solution. If $w_i \cdot x_{LR} \ge w_i \cdot x_{OPT}$ for all $i$, and $w = \sum_i w_i$, then $w \cdot x_{LR} \ge w \cdot x_{OPT}$, i.e.~$x_{LR}$ is $\alpha$-approximately optimal.
\end{lemma}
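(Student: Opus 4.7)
The plan is to observe that this lemma is essentially nothing more than linearity of the dot product applied to the decomposition $w = \sum_i w_i$. There is no intricate combinatorial step: the whole content lies in the hypothesis, and the conclusion follows by a one-line summation.

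Concretely, I would proceed as follows. First, take the hypothesized inequalities $w_i \cdot x_{LR} \ge w_i \cdot x_{OPT}$ and sum them over all indices $i$ to obtain $\sum_i w_i \cdot x_{LR} \ge \sum_i w_i \cdot x_{OPT}$. Second, invoke distributivity of the inner product to rewrite each side: for any vector $x$, we have $\sum_i w_i \cdot x = \bigl(\sum_i w_i\bigr) \cdot x = w \cdot x$. Applying this identity to both $x_{LR}$ and $x_{OPT}$ yields $w \cdot x_{LR} \ge w \cdot x_{OPT}$, which is exactly the desired conclusion.

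There is no genuine obstacle to this argument; the entire proof is a single line of algebra. The only thing worth noting for the reader is that the lemma does not require $x_{LR}$ to be optimal with respect to any particular $w_i$, nor does it require the $w_i$ to be nonnegative in general (though in typical local ratio applications they are); it just requires the asserted inequality in a uniform direction across $i$, which is preserved under summation. The real difficulty of the local ratio method, and what will be addressed in the subsequent application to $k$-hypergraph demand matching, lies not in the lemma itself but in engineering a decomposition $w = \sum_i w_i$ together with a single vector $x_{LR}$ that simultaneously beats $x_{OPT}$ (up to the approximation factor) against every $w_i$.
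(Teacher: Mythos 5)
Your proof is correct and supplies exactly the one-line summation-plus-linearity argument that the paper leaves implicit (the lemma is stated without proof, being regarded as immediate). One small remark on the statement rather than your argument: as written the hypotheses and conclusion omit the factor $1/\alpha$, so the trailing phrase ``$\alpha$-approximately optimal'' is orphaned; your proof, applied verbatim after replacing each hypothesis by $w_i \cdot x_{LR} \ge \tfrac{1}{\alpha}\, w_i \cdot x_{OPT}$, yields the intended conclusion $w \cdot x_{LR} \ge \tfrac{1}{\alpha}\, w \cdot x_{OPT}$.
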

Compared with fractional local ratio, we do not start by solving an LP, which is faster. But, we cannot use $x^*$ to guide the algorithm --- we have to ensure an oblivious approximation guarantee that holds against the unknown optimal solution.

In this section we briefly outline a reinterpretation of the $2k$-approximation for $k$-hypergraph demand matching from~\cite{Parekh11} as a local ratio algorithm. Compared with \cite{Parekh11}, the new algorithm will be both simpler and faster (as we solve no LPs). The inspiration for this simplified algorithm is a connection between local ratio algorithm and iterated packing elucidated by Bar-Yehuda et al.~\cite[p.~12]{BarYehudaEtAl-SICOMP06}.

As before, let $A$ be the incidence matrix, and let $A[d]$ be the same matrix but with the column for each $e$ having its entries multiplied by $d_e$. Then an ILP formulation for the hypergraph demand matching problem is to find an integral $x$ maximizing $wx$ subject to $A[d]x \le b$ and $c \ge x \ge 0$.
We will assume that $d_e \le b_v$ whenever $v \in e$. This is without loss of generality for the purposes of approximation, while for bounding the integrality gap this \emph{no-clipping assumption} is needed to even get a constant upper bound (even if $k=1$, a.k.a.~knapsack). 

We use the same basic ideas used in~\cite{Parekh11} but arranged differently.  The crux in our case is to show that for every instance, there is a hyperedge $e$ and a weight function satisfying that any feasible solution is either $2k$-approximately optimal or has room for $e$ to be added. With this (Lemma~\ref{lemma:lr}) and using the local ratio lemma, we can show that Algorithm~\ref{fig:alg1} is a $2k$-approximation algorithm.

\begin{lemma}\label{lemma:lr}Let $e$ be the hyperedge so that $d_{e}$ is minimal. Define a weight function $\widehat{w}$ on all hyperedges by $\widehat{w}_e=1$, and for all other $f$,
\begin{equation}\widehat{w}_f := \sum_{v \in e \cap f} \frac{d_f}{\max\{b_v-d_e, d_e\}}.\label{eq:wf}\end{equation}
Then (i) every feasible solution (whether or not it contains $e$) has value at most $2k$ under $\widehat{w}$, (ii) $\widehat{w}_e \ge 1$, and (iii) any feasible subset of $E \bs \{e\}$ to which $e$ cannot be added has weight at least 1 under $\widehat{w}$.
\end{lemma}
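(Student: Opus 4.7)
The plan is to verify (i), (ii), (iii) separately; (ii) is immediate since $\widehat{w}_e=1$ by definition. For both (i) and (iii) the main technical trick is to swap the order of summation in $\widehat{w}(F) = \sum_{f \in F}\widehat{w}_f$ so that each vertex $v \in e$ contributes a clean quotient.

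For (i), fix a feasible $F$ and write
\[
\widehat{w}(F) - [e \in F] = \sum_{v \in e} \frac{\sum_{f \in F \setminus \{e\},\, v \in f} d_f}{\max\{b_v-d_e,\,d_e\}}.
\]
I would split on whether $e \in F$. If $e \in F$, feasibility at each $v \in e$ gives $\sum_{f \in F \setminus \{e\},\, v \in f} d_f \le b_v - d_e \le \max\{b_v-d_e, d_e\}$, so each of the $k$ inner quotients is at most $1$, yielding $\widehat{w}(F) \le 1 + k \le 2k$. If $e \notin F$, feasibility only gives $\sum_{f \in F,\, v\in f} d_f \le b_v$, but I would use the elementary bound $\max\{b_v-d_e, d_e\} \ge b_v/2$ so that each inner quotient is at most $2$, yielding $\widehat{w}(F) \le 2k$.

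For (iii), $e$ is blocked at some $v \in e$, meaning $\sum_{f \in F,\, v \in f} d_f > b_v - d_e$. Restricting the sum defining $\widehat{w}(F)$ to the contribution at this one $v$ gives
\[
\widehat{w}(F) \;\ge\; \frac{\sum_{f \in F,\, v \in f} d_f}{\max\{b_v-d_e,\,d_e\}},
\]
and I would show this ratio is at least $1$ by case analysis. If $b_v - d_e \ge d_e$, the denominator equals $b_v - d_e$, and the blocking condition directly gives numerator $>$ denominator. If $b_v - d_e < d_e$, the denominator is $d_e$; by minimality of $d_e$ every $f \in F$ with $v \in f$ satisfies $d_f \ge d_e$, and since the blocking condition together with no-clipping ($b_v - d_e \ge 0$) forces $\sum_{f \in F,\, v \in f} d_f > 0$, at least one such $f$ exists, so the numerator is at least $d_e$.

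The one subtle step is the second case of (iii): the blocking inequality alone gives only $\sum > b_v - d_e$, which may be arbitrarily small, so we truly need both the minimality of $d_e$ and the no-clipping assumption to conclude the sum jumps up to at least $d_e$. Everything else is bookkeeping once the order of summation is reorganized and the bound $\max\{b_v - d_e, d_e\} \ge b_v/2$ is deployed in (i).
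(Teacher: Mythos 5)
Your proof is correct and follows essentially the same approach as the paper's: swap the order of summation, split (i) on whether $e \in F$ using feasibility (with the $\max\{b_v-d_e,d_e\} \ge b_v/2$ trick), and for (iii) use the blocking vertex together with minimality of $d_e$ and no-clipping to lower bound the contribution at $v$ by $1$. The only cosmetic difference is that in (iii) the paper collapses your two cases into a single line by noting the numerator is simultaneously at least $b_v - d_e$ (blocking) and at least $d_e$ (one incident edge, each with $d_f \ge d_e$), hence at least $\max\{b_v-d_e, d_e\}$.
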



\begin{algorithm}
\caption{HDM($V, \E, d, b, w$) \comment{for hypergraph demand matching}}\label{alg-primal-dual}\label{fig:alg1}
\begin{algorithmic}[1]
\State Pick $e \in \E$ such that $d_e$ is minimum, or return $\varnothing$ if $E = \varnothing$. \label{line-pick-e}
\State Define a new weight function $\widehat{w} \in \R^\E$ via $\widehat{w}_e = 1$ and \eqref{eq:wf} for $f \neq e$.
\State Let $w_e\widehat{w}$ be its scalar multiple by $w_e$, and $w' := w - w_e \widehat{w}$. \comment{note $w'_e=0$}
\State Define $E' := \{e \in E \mid w'_e > 0\}.$ \comment{note $e \not\in E'$}
\State Recurse: $\F' := \mathrm{HDM}(V, E', d, b, w'|_{E'})$.
\State If $\F' \cup \{e\}$ is feasible define $\F := \F' \cup \{e\}$, else define $\F := \F'$.
\State Return $\F$.
\end{algorithmic}
\end{algorithm}

\bibliography{b-matching}
\bibliographystyle{abbrv}

\end{document}